\renewcommand*\refstepcounter[1]{\stepcounter{#1}%
  \protected@edef\@currentlabel{%
    \csname p@#1\expandafter\endcsname
      \csname the#1\endcsname
  }%
}
\newcommand{\eps}{\varepsilon}
\newcommand{\spinmat}[1]{
   \ifthenelse{\equal{#1}{1}}{\begin{pmatrix}0 & 1    \\ 1   & 0 \end{pmatrix}}{}
   \ifthenelse{\equal{#1}{2}}{\begin{pmatrix}0 & -\im \\ \im & 0 \end{pmatrix}}{}
   \ifthenelse{\equal{#1}{3}}{\begin{pmatrix}1 & 0    \\ 0   & -1\end{pmatrix}}{}
}
\newcommand{\tspinmat}[1]{
   \ifthenelse{\equal{#1}{1}}{\tm{0}{1}{\vphantom{-1}1}{0}}{}
   \ifthenelse{\equal{#1}{2}}{\tm{0}{-\im}{\im}{0}}{}
   \ifthenelse{\equal{#1}{3}}{\tm{1}{0}{0}{-1}}{} 
}
\newcommand{\gamat}[1]{
   \ifthenelse{\equal{#1}{5}}{\begin{pmatrix}0 & \id_2    \\ \id_2   & 0 \end{pmatrix}}{}
}
\newcommand{\tgamat}[1]{
   \ifthenelse{\equal{#1}{5}}{\tm{0}{\id_2}{\vphantom{-1}\id_2}{0}}{}
}
\newcommand{\mD}{\mathcal D}
\newcommand{\mL}{\mathcal L}
\newcommand{\wu}{\widetilde u}
\newcommand{\ra}{\alpha} 
\newcommand{\rb}{\beta} 
\newcommand{\rsa}{a} 
\newcommand{\rsb}{b} 
\newcommand{\fa}{\mathfrak a}	
\newcommand{\fh}{\mathfrak h}	
\newcommand{\ft}{\mathfrak t}	
\newcommand{\fv}{\mathfrak v}	
\newcommand{\CH}{\mathcal H} 
\newcommand{\C}{\mathbb C}
\newcommand{\N}{\mathbb N}
\newcommand{\R}{\mathbb R}
\newcommand{\hs}{\mathscr H}
\newcommand{\Ltwo}{\mathscr L^2}
\newcommand{\Ltwoloc}{\mathscr L^2_{\rm loc}}
\newcommand{\Lfourloc}{\mathscr L^4_{\rm loc}}
\newcommand{\Cinfty}{\mathscr C_0^\infty}
\newcommand{\scalar}[2]{\big(#1,#2\big)}
\newcommand{\spec}{\sigma}
\newcommand{\pointspec}{\sigma_{\rm p}}
\newcommand{\essspec}{\sigma_{\rm ess}}
\newcounter{listcounter}
\newcommand{\CondItem}[1]{
   \renewcommand{\p@listcounter}[1]{#1}	
   \item[#1]
   \refstepcounter{listcounter}		
}
\newenvironment{condlist}
{
\begin{list}{counter is \arabic{listcounter}}{
      \usecounter{listcounter}
      \setlength{\rightmargin}{0.0\textwidth}
      \setlength{\leftmargin}{0.12\textwidth}
      \setlength{\itemindent}{0pt}
      \setlength{\labelsep}{3ex}
      \settowidth{\labelwidth}{ (D2.b) }
   } 
} 
{\end{list}}
\newcounter{proofenumi}
   {\begin{list}{(\roman{proofenumi})}{\usecounter{proofenumi}%
      \setlength{\itemindent}{5.475pt}%
      \setlength{\labelsep}{5.475pt}%
      \setlength{\leftmargin}{0pt}%
      }}%
{\end{list}}
\newcounter{numberedcases}\setcounter{numberedcases}{0}
\DeclareMathOperator*{\esssup}{ess\,sup}
\DeclareMathOperator*{\essinf}{ess\,inf}
\newcommand{\rd}{{\rm d}}
\newcommand{\pdiff}[2][]{\frac{\partial #1}{\partial #2}}
\newcommand{\id}{{I}}
\newcommand{\im}{{\rm i}}
\newcommand{\e}{{\rm e}}
\newcommand{\cc}[1]{\overline{#1}}
\newcommand\la{\lambda}
\newcommand{\titlename}{Simplicity of extremal eigenvalues of the Klein-Gordon equation}
\newtheorem{theorem}{Theorem}[section]
\newtheorem{lemma}[theorem]{Lemma}
\newtheorem{proposition}[theorem]{Proposition}
\theoremstyle{definition}
\newtheorem{definition}[theorem]{Definition}
\newtheorem{example}[theorem]{Example}
\theoremstyle{remark}
\newtheorem{remark}[theorem]{Remark}
\newtheorem{corollary}[theorem]{Corollary}
\numberwithin{equation}{section}
\begin{document}


\title[Simplicity of extremal eigenvalues of the Klein-Gordon equation]{Simplicity of extremal eigenvalues of the Klein-Gordon equation}

\author{Mario Koppen}
\address{Mathematics Centre, Technical University of Munich, Boltzmannstr.\ 3, 85748 Garching, Germany}
\email{mario@ma.tum.de}
\author{Christiane Tretter}
\address{Institute of Mathematics, University of Bern, Sidlerstr.\ 5, 3012 Bern, Switzerland}%
\email{tretter@math.unibe.ch}
\author{Monika Winklmeier}
\address{Department of Mathematics, Universidad de Los Andes, Cra.\ 1a No 18A-10, A.A.\ 4976 Bogot\'a, Colombia}
\email{mwinklme@uniandes.edu.co}
\dedicatory{Dedicated to Professor Heinz Langer on the occasion of his 75th birthday}
\begin{abstract}
  We consider the spectral problem associated with the Klein-Gordon equation for unbounded electric potentials. If the spectrum of this problem is contained in 
  two disjoint real intervals and the two inner boundary points are eigenvalues, we show that these extremal eigenvalues are simple and possess 
  strictly positive eigenfunctions. Examples of electric potentials satisfying these assumptions are given.
\end{abstract}
\date{\today} 

\maketitle

\section{Introduction} 
\label{sec:intro}
The motion of a spin-$0$ particle with mass $m>0$ and electric charge $e>0$ in an exterior electromagnetic field in $n$ spatial dimensions is described by
the time de\-pend\-ent Klein-Gordon equation. With the physical units chosen such that $c=\hbar=1$, the Klein-Gordon equation takes the \vspace{-0.8mm} form
\begin{align}
   \label{eq:KGtime}
   \biggl[
      \Bigl( \pdiff{t} - \im\, eq \Bigr)^2 
      + \sum_{j=1}^n \Bigl( -\im\, \pdiff{x_j} - e A_j \Bigr)^2
      + m^2
      \biggr] \wu
   \ =\ 0
\end{align}
in $\Ltwo(\R^n)$ where the electric potential $q$ and the components $A_j$ of the vector~potential are real-valued functions.
For the corresponding Cauchy problem, solutions $\wu(\,\cdot\,, t)\in\Ltwo(\R^n)$, $t\in R$, are subject to an initial condition $\wu(\,\cdot\,, 0)=u_0$ with $u_0\in\Ltwo(\R^n)$.

The solvability of this Cauchy problem is closely related to the spectral problem 
\begin{equation}
\label{eq:KG}
   \Bigl( H_0 - (\lambda - V)^2 \Bigr)u\ =\ 0,
\end{equation}
formally obtained from the Klein-Gordon equation \eqref{eq:KGtime} by means of the ansatz $\wu(\,\cdot\,, t) = \e^{\im\lambda t}u$. 
Here $H_0$ is a self-adjoint realization in $\Ltwo(\R^n)$ of the formal second order  \vspace{-0.8mm} differential~expression
\begin{align}
   \label{eq:H0}
   \sum_{j=1}^n \Bigl( -\im\, \pdiff{x_j} - e A_j \Bigr)^2 + m^2
\end{align}
and $V$ is the  multiplication operator by $eq$ in the Hilbert space $\Ltwo(\R^n)$ with maximal domain, which is in general unbounded.
Note that, in \eqref{eq:KG} and \eqref{eq:H0}, suitable assumptions on $V$ and $A_j$ are required to properly define the sums of operators, at least in the sense of quadratic forms.

For $V=0$, a point $\la$ belongs to the spectrum of the eigenvalue problem \eqref{eq:KG} if and only if $\la^2$ belongs to the spectrum of $H_0$.
Since the latter is continuous and given by $\sigma(H_0)=\sigma_{\rm ess}(H_0)=[m^2,\infty)$, the spectrum of 
\eqref{eq:KG} is continuous and consists of the two disjoint intervals $\big(-\infty,-m\big] \dot\cup \big[m,\infty\big)$. 
For bounded $V$ with $\|V\|< m$, classical perturbation arguments show that the spectrum of \eqref{eq:KG} 
is contained in the two disjoint intervals
\begin{equation}
\label{specgap}
  \big(\!-\!\infty,-m+\|V\|\big] \,\dot\cup\, \big[m-\|V\|,\infty\big).
\end{equation}
For bounded $V$ with $\|V\| > m$,  it was observed already in the 1940ies that \eqref{eq:KG} may have non-real spectrum (see \cite{schiff}); the latter is related to the so-called Klein paradox. 

On the other hand, even for unbounded $V$, the spectrum of \eqref{eq:KG} may remain real and retain a spectral gap (see \cite{LNT06}); e.g.\ if $V$ is $H_0^{\frac 12}$-bounded with $\|VH_0^{-\frac 12}\|<1$, then the spectrum of \eqref{eq:KG} 
is contained in the two disjoint intervals
\begin{equation}
\label{eq:unbdd-gap}
  \big(\!-\!\infty,-m+\|VH_0^{-\frac 12}\|m\big] \,\dot\cup\, \big[m-\|VH_0^{-\frac 12}\|m,\infty\big).
\end{equation}
The aim of this paper is to investigate if, in the case when eigenvalues bound the spectral gap, they are simple. In analogy with the ter\-min\-ology for Schr\"odinger operators, we call these extremal eigenvalues ground states. 

For bounded $V$, the simplicity of the ground states was proved in \cite{Naj83}. For relatively bounded $V$, the eigenvalues in the gap of the essential spectrum were studied and estimated by means of variational principles in \cite{LT06}, but their multiplicities were not investigated. In \cite{Naj83}, as well as in \cite{LNT06}, \cite{LNT08}, the Klein-Gordon equation or, equivalently, the spectral problem \eqref{eq:KG} was linearized to obtain a first order system of differential equations or a spectral problem that depends linearly on the eigenvalue parameter $\la$, respectively.

For the purpose of the present paper, the original form \eqref{eq:KG} of the Klein-Gordon spectral problem used in \cite{LT06} is more advantageous for two reasons. First, the operators $T(\la)$ associated with the left-hand side of \eqref{eq:KG} depend quadratically on the spectral parameter $\la$.
This allows us to use the theory of strongly damped quadratic operator polynomials to ensure the reality of the spectrum due to the existence of a spectral gap and to prove that all the eigenvalues are semi-simple, i.e.\ there are no associated vectors.  Secondly, for fixed $\lambda$ in the spectral gap, the operator $T(\la)$ is a semi-bounded perturbation of the free Schr\"odinger operator $H_0=-\Delta+m^2$. Therefore we may apply a Krein-Rutman type theorem to establish the simplicity of the ground states of the Klein-Gordon spectral problem~\eqref{eq:KG}. 

The theory of self-adjoint quadratic operator polynomials in a Hilbert space~$\CH$, 
$$
L(\la)=\la^2 A + \la B + C, \quad \la \in \C,
$$ 
with $A=I$, bounded $B$, and compact positive $C$, was developed in the seminal~work \cite{MR0195291} of M.G.\ Krein and H.\ Langer (the results of which first appeared in \cite{MR0169060} and were translated in \cite{MR511976}, \cite{MR516767}). For non-compact $C$, this theory was further developed by H.\ Langer in \cite{hhabil} under the assumption that the pencil $L$ is strongly damped. This notion goes back to R.\ Duffin (see \cite{MR0069030}) and means~that 
$$
  (Bx,x)^2 > 4 \, (Ax,x) (Cx,x),  \quad x\in \CH \setminus \{0\};
$$ 
as a consequence, each quadratic equation $(L(\la)x,x)=0$ has two different real solutions $p_-(x) < p_+(x)$ and the two (real) root zones $\Delta_\pm := \{ p_\pm(x) : x \in \CH \setminus \{0\} \}$ are disjoint and may have at most a common boundary point. It is well-known that the spectrum of a strongly damped operator polynomial is real and all Jordan chains have length~$1$ (see e.g.\ \cite[\S 31]{MR971506} where the term `hyperbolic' is used instead of strongly damped). The case of strongly damped operator pencils with unbounded coefficients was studied by A.A.\ Shkalikov (see \cite{MR933487}); here the two root zones $\Delta_\pm$ may touch each other also at $\infty$, as it is the case for the Klein-Gordon pencil~$T$ (see \eqref{eq:op-pol} below).

An operator version of the well-known Perron-Frobenius theorem for matrices with positive entries was established by M.G.\ Krein and M.A.\ Rutman (see \cite{MR0027128}, \cite{MR0038008}). In the original version, if $(M, \rd\mu)$ is a measure space and $T$ is an operator in $\Ltwo(M, \rd\mu)$ such that, for some integer $k$, the operator $T^k$ is positivity improving, i.e.\ $T^k f$ is positive for every non-negative $f\in \Ltwo(M, \rd\mu) \setminus \{0\}$, then the largest eigenvalue in modulus of $T$ is real, positive, and simple with a strictly positive eigenfunction. There are numerous generalizations and extensions of the Krein-Rutman theorem. Here we apply a corresponding result by W.G.\ Faris  for bounded self-adjoint operators in $\Ltwo(M, \rd\mu)$, which applies to the resolvent or the semi-group generated by semi-bounded Schr\"odinger operators (see \cite{faris}).

The paper is organised as follows. 
In Section~\ref{sec:T} we consider an abstract form of the Klein-Gordon spectral problem \eqref{eq:KG} where $H_0$ is assumed to be a uniformly positive operator in a Hilbert space $\CH$ and $V$ is a symmetric perturbation which is $H_0$-form-bounded with $H_0$-form bound $<1$; the latter implies, in particular, that $\mD(H_0^{\frac 12}) \subset \mD(V)$. Under these assumptions, with the right-hand side of \eqref{eq:KG} we can associate a self-adjoint operator 
\begin{equation}
\label{eq:op-pol}
  T(\la) := H_0 \dot - V^2 + 2 \la V - \la^2, \quad \la \in \C,
\end{equation}
where $H_0 \dot - V^2$ denotes the operator form sum of $H_0$ and $-V^2$.
In Section~\ref{sec:reality} we study the quadratic operator polynomial $T$ given by \eqref{eq:op-pol} by means of the associated quadratic form $\ft(\la)$ defined on $\mD(\ft(\la))= \mD(H_0^{\frac 12}) \subset \mD(V)$. If the two root zones $\Lambda_\pm$ consisting of all zeros $p_\pm(x)$~of $\ft(\la)[x]=0$ for $x \in \mD(H_0^{\frac 12})$ are separated by a gap, then $T$~is strongly damped and hence all eigenvalues of $T$ are semi-simple, i.e.\ there are no associated vectors. In Section \ref{sec:condV} we establish explicit conditions on the operator $V$ in \eqref{eq:op-pol} guaranteeing that the two root zones are 
contained in two disjoint intervals separated by a gap, such as in \eqref{eq:unbdd-gap}. In Section \ref{sec:positivity} we specialize to the Klein-Gordon equation \eqref{eq:KG} in $\Ltwo(\R^n)$ where $H_0=-\Delta+m^2$. We show that, for $\la$ in the gap between the two zones $\Lambda_\pm$, the corresponding operator $T(\la)$ is positivity improving so that a Krein-Rutman type theorem applies. This yields that the ground states 
 of the Klein-Gordon equation are simple. Finally, in Section \ref{sec:examples} we show how our results apply to some concrete potentials, including Coulomb-like and Rollnik~potentials with vanishing vector potential $A$. 

The following definitions are used throughout the paper. For a linear operator~$A$ in a Hilbert space $\CH$, we denote by $\mD(A)$ its \emph{domain} and by $\ker A$ its \emph{kernel}. \linebreak A sesquilinear form $\fa$ with domain $\mD(\fa)$ is called \emph{symmetric} if $\fa[x,y] = \overline{\fa[y,x]}$, \linebreak $x,y\in\mD(\fa)$, and we write  
$\fa[x]:=\fa[x,x]$, $x\in\mD(\fa)$, for the associated quadratic~form. 
For a symmetric linear operator $A$, we write $A\gg 0$ if there exists some $\gamma_0 > 0$ such that $(Ax,x) \ge \gamma_0 \|x\|^2$, $x\in \mD(A)$; the same notation is used for quadratic forms.
The \emph{numerical range} of a linear operator $A$ and of a quadratic form~$\fa$, respectively, are given by
\begin{equation}
\label{numrange0}
\begin{array}{ll}
   W(A)\hspace*{-2.5mm}&:= \{ \scalar{Ax}{x}:\, x\in\mD(A),\ \|x\| = 1 \}, \\[1mm]
   W(\fa)\hspace*{-2.5mm}&:= \{ \,\fa[x]:\, x\in\mD(\fa),\ \|x\| = 1 \}. 
\end{array}   
\end{equation}
For a symmetric operator $A$, the numerical range $W(A)$ is real; for a self-adjoint operator $A$, the inclusion $\spec(A) \subset \overline{W(A)}$ holds.
If $\fa$ is a densely defined closed symmetric sesquilinear form bounded from below and $A$ is the self-adoint operator associated with $\fa$ by the first representation theorem, i.e.\ $\fa[x,y]= \scalar{Ax}{y}, \, x\in\mD(A)$, $y\in \mD(\fa)$, then $W(A)$ is dense in $W(\fa)$ (see \cite[Thm.\ VI.2.1, Cor.\ VI.2.3]{kato}). 
More details on linear operators and quadratic forms may be found in~\cite{kato}.


If $T$ is an analytic operator function on $\C$, i.e.\ a function $\la \mapsto T(\la)$ defined on $\C$ whose values $T(\la)$ are closed linear operators,
the \emph{resolvent set}, \emph{spectrum}, \emph{point spectrum} and \emph{essential spectrum} of $T$ are defined by
\begin{align*}
   \rho(T) &:= \{ \lambda\in \C \,:\, 0\in\rho(T(\lambda)) \} = \{ \la \in \C : T(\la) \text{ bijective}\},
   \\
   \spec(T) &:= \{ \lambda\in \C \,:\, 0\in\spec(T(\lambda)) \} = \C \setminus \rho(T),
   \\
   \pointspec(T) &:= \{ \lambda\in \C \,:\, 0\in\pointspec(T(\lambda)) \} = \{ \la \in \C : T(\la) \text{ not injective}\} ,
   \\
   \essspec(T) &:= \{ \lambda\in \C \,:\, 0\in\essspec(T(\lambda)) \} = \{ \la \in \C : T(\la) \text{ not Fredholm}\}.
\end{align*}
If $\lambda_0 \in \C$ is an eigenvalue of the operator function $T$, then a sequence $(x_j)_{j=0}^{m-1}$ is called a Jordan chain of length $m$ of $T$ in $\lambda_0$ if
\begin{align}
\label{Jordan}
   \sum_{k=0}^j \frac{1}{k!}\ T^{(k)}(\lambda_0) x_{j-k}\, =\, 0,
   \quad j=0,\, \dots,\, m-1.
\end{align}
An eigenvalue $\lambda_0$ of $T$ is called \emph{semi-simple} if the maximal length of a Jordan chain is $1$;
it is called \emph{simple} if $\la_0$ is semi-simple and $\dim \ker T(\la_0) = 1$.
If the domains $\mD(T(\la))=: \mD_0$ are independent of $\la$, we define the \emph{numerical range} of $T$ as
\begin{align}
\label{numrange}
  W(T) := \{ \la \in \C : \exists \ x \in \mD_0 \ (T(\la)x,x) = 0 \}.
\end{align}
Clearly $\pointspec(T) \subset W(T)$. For operator functions $T$ with bounded values, the inclusion $\spec(T) \subset \overline{W(T)}$ 
holds if there exists a $z_0\in\C$ such that $0 \notin \overline{W(T(z_0))}$ (see \cite[Thm.\ 26.6]{MR971506}). 

All the above notions are defined analogously for analytic operator functions defined on some domain $\Omega\subset \C$; since the operator functions occurring in this paper are polynomials, we may restrict ourselves to the case $\Omega=\C$. 
More details on analytic operator functions with bounded values may be found in \cite{MR971506}.


\section{The abstract Klein-Gordon pencil $T$} 
\label{sec:T}

Suppose that $H_0$ is an unbounded self-adjoint uniformly positive operator in a Hilbert space $\CH$, $H_0 \ge m^2 > 0$, and $V$ is a symmetric operator in $\CH$.
In this section, we associate a self-adjoint operator $T(\la)$ with the formal operator sum $H_0 - (\lambda - V)^2$ in \eqref{eq:KG} by means of quadratic forms.

To this end, we assume that $V$ satisfies the following two assumptions:
\begin{condlist}
   \CondItem{(V1)}\label{A1} 
   $\mD(H_0^{\frac{1}{2}}) \subseteq \mD(V)$; \vspace{1.5mm}
   \CondItem{(V2)}\label{A3} 
   there exist $\ra$, $\rb\ge 0$ with $\rb <1$ such that
\begin{align}
\label{eq:H1hBoundSquare}
   |\scalar{V x}{Vx}| \le \ra \,\|x\|^2 + \rb\, \big|\scalar{H_0^{\frac 12} x}{H_0^{\frac 12} x}\big|,
   \quad x\in\mD(H_0^{\frac 12}).
\end{align}
\end{condlist}
Assumption \ref{A3} means that the self-adjoint operator $V^\ast V$, of which $V^2$ is a restriction, is $H_0$-form-bounded with $H_0$-form bound $<1$;
the infimum over all $\rb$ such that \eqref{eq:H1hBoundSquare} holds for some $\ra\ge 0$ is called \emph{$H_0$-form bound} of $V^\ast V$ (see \cite[Chapt.~X.2]{RSII}).

\begin{remark}
\label{rem:kato}
It is well-known (see e.g.\ \cite[Sect.\ V.4.1]{kato}) that the existence of $\ra$, $\rb\ge 0$, $\rb <1$, with \eqref{eq:H1hBoundSquare} is equivalent to the existence of $\rsa$, $\rsb\ge 0$, $\rsb <1$, such that 
\begin{alignat}{3}
   \label{eq:H1hBound}
   \| Vx \| &\le \rsa \, \|x \| + \rsb \,\big\|H_0^{\frac{1}{2}} x\big\|,
   &\quad& x\in\mD(H_0^{\frac{1}{2}});
\end{alignat}
in fact, \eqref{eq:H1hBoundSquare} implies \eqref{eq:H1hBound} with $\rsa=\sqrt \ra$, $\rsb=\sqrt \rb$, while \eqref{eq:H1hBound} implies  \eqref{eq:H1hBoundSquare} with $\ra=(1+\eps^{-1}) \rsa^2$, $\rb=(1+\eps) \rsb^2$ for arbitrary $\eps>0$. Hence assumption \ref{A3} is equivalent to the assumption 
\begin{condlist}
 \CondItem{(V2$'$)}\label{A3'} 
$V$ is $H_0^{\frac 12}$-bounded with $H_0^{\frac 12}$-bound $<1$.
\end{condlist}

\end{remark}


Assumption \ref{A1} alone already implies that $VH_0^{-\frac 12}$ is a bounded operator. 
The  norm of~$VH_0^{-\frac 12}$ is related to the constants in assumption \ref{A3} as follows.

\begin{proposition}
\label{prop:norms}
Suppose that assumption {\rm \ref{A1}} holds. If {\rm \ref{A3}} is satisfied with constants $\ra$,~$\rb$ in \eqref{eq:H1hBoundSquare} or with constants $\rsa$, $\rsb$ in \eqref{eq:H1hBound}, respectively,  then 
\begin{align}
\label{eq:normS}
 \|VH_0^{-\frac 12}\| \le \frac{\rsa}m+\rsb, \quad 
 \|VH_0^{-\frac 12}\| \le \frac{\sqrt\ra}m+\sqrt\rb, \quad 
 \|VH_0^{-\frac 12}\| \le \sqrt{\frac{\ra}{m^2}+\rb}.
\end{align}
In particular, the following are equivalent: 
 \begin{enumerate}
  \item[{\rm i)}] $\big\|VH_0^{-\frac 12} \big\| < 1$;
  \item[{\rm ii)}] assumption {\rm \ref{A3}} holds with $\rsa, \rsb \ge 0$ in \eqref{eq:H1hBound} such that $\displaystyle \frac\rsa m + \rsb < 1$;
  \item[{\rm iii)}] assumption {\rm \ref{A3}} holds with $\ra, \rb \ge 0$ in \eqref{eq:H1hBoundSquare} such that $\displaystyle\frac {\sqrt\ra} m + \sqrt\rb < 1$;
  \item[{\rm iv)}] assumption {\rm \ref{A3}} holds with $\ra, \rb \ge 0$ in \eqref{eq:H1hBoundSquare} such that $\displaystyle\sqrt{\frac{\ra}{m^2}+\rb} < 1$.
\end{enumerate}
\end{proposition}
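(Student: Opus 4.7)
The plan is first to derive the three bounds in \eqref{eq:normS} by the substitution $y := H_0^{-1/2}x$, and then to deduce the equivalences (i)--(iv) from these bounds.

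For the bounds, I will exploit that $H_0 \ge m^2 > 0$ together with the spectral theorem gives a bounded inverse $H_0^{-1/2}$ on $\CH$ with $\|H_0^{-1/2}\| \le 1/m$, mapping $\CH$ bijectively onto $\mD(H_0^{1/2})$. Thus for any $x \in \CH$, the element $y := H_0^{-1/2}x$ lies in $\mD(H_0^{1/2}) \subseteq \mD(V)$ by assumption \ref{A1} and satisfies $H_0^{1/2}y = x$. Inserting $y$ into \eqref{eq:H1hBound} yields
\[
\|VH_0^{-1/2}x\| = \|Vy\| \le \rsa\|y\| + \rsb\|H_0^{1/2}y\| \le \Bigl(\frac{\rsa}{m} + \rsb\Bigr)\|x\|,
\]
which is the first bound in \eqref{eq:normS}. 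The second follows from the first together with Remark \ref{rem:kato}, which allows passing from \eqref{eq:H1hBoundSquare} with constants $\ra, \rb$ to \eqref{eq:H1hBound} with constants $\sqrt\ra, \sqrt\rb$. For the third I will apply \eqref{eq:H1hBoundSquare} directly to the same $y = H_0^{-1/2}x$ and extract the square root.

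For the equivalences, the implications (ii)$\Rightarrow$(i), (iii)$\Rightarrow$(i) and (iv)$\Rightarrow$(i) are immediate consequences of the three bounds just established. For the converse direction, I will assume (i), set $c := \|VH_0^{-1/2}\| < 1$, and observe that for every $x \in \mD(H_0^{1/2})$,
\[
\|Vx\| = \|VH_0^{-1/2}\, H_0^{1/2}x\| \le c\,\|H_0^{1/2}x\|.
\]
This is \eqref{eq:H1hBound} with $\rsa = 0$, $\rsb = c$, giving (ii) because $\rsa/m + \rsb = c < 1$. Squaring the same inequality delivers \eqref{eq:H1hBoundSquare} with $\ra = 0$, $\rb = c^2$, hence (iii) and (iv), since $\sqrt{\ra}/m + \sqrt{\rb} = \sqrt{\ra/m^2 + \rb} = c < 1$.

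The one genuinely useful observation is that under (i), the relative $H_0^{1/2}$-bound is realised with additive constant zero, which sidesteps the loss of the form $(1+\eps^{-1})\rsa^2$ that appears in the general passage between \eqref{eq:H1hBound} and \eqref{eq:H1hBoundSquare} in Remark \ref{rem:kato}. Beyond this, the proof is essentially bookkeeping and no substantial obstacle is anticipated.
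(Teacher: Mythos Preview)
Your proposal is correct and follows essentially the same approach as the paper: the same substitution $y=H_0^{-1/2}x$ into \eqref{eq:H1hBound} and \eqref{eq:H1hBoundSquare} for the three bounds, and the same observation that (i) gives \eqref{eq:H1hBound} with $\rsa=0$, $\rsb=\|VH_0^{-1/2}\|$ for the reverse implications. Your closing remark about avoiding the $(1+\eps^{-1})\rsa^2$ loss is a nice explanatory touch not made explicit in the paper, but the arguments are otherwise identical.
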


\begin{proof}
Since $H_0 \ge m^2$, we have  $\|H_0^{-\frac 12}\|\le 1/m$. If \eqref{eq:H1hBoundSquare} or \eqref{eq:H1hBound} hold, then the estimates 
\begin{align}
\label{mid1}
  \big\| VH_0^{-\frac 12} x \big\|^2 \!& \le \ra \,\big\|H_0^{-\frac 12} x \big\|^2 + \rb \, \big\| H_0^{\frac 12} H_0^{-\frac 12} x \big\|^2 
  \le \Big(  \frac\ra {m^2} + \rb \Big) \|x\|^2,\\
\label{mid2}  
  \big\| VH_0^{-\frac 12} x \big\| \ & \le \rsa \,\big\|H_0^{-\frac 12} x \big\| + \rsb \, \big\| H_0^{\frac 12} H_0^{-\frac 12} x \big\| 
  \le \Big(  \frac\rsa m + \rsb \Big) \|x\|,
\end{align}
for $x\in \CH$ imply the first and the third estimate in \eqref{eq:normS}. Since \eqref{eq:H1hBoundSquare} with $\ra$, $\rb$ implies \eqref{eq:H1hBound} with  $\rsa=\sqrt\ra$, $\rsb=\sqrt\rb$ by Remark \ref{rem:kato}, the second estimate in \eqref{eq:normS} follows from the first. 
 
i) $\Rightarrow$ ii), i) $\Rightarrow$ iii), i) $\Rightarrow$ iv): \ Since $ VH_0^{-\frac 12}$ is bounded by {\rm \ref{A1}}, the estimate
\[
  \|Vx\| = \big\| VH_0^{-\frac 12} H_0^{\frac 12} x \big\| \le \big\| VH_0^{-\frac 12} \big\| \big\| H_0^{\frac 12} x \big\|, \quad x\in \mD(H_0^{\frac 12}),
\]
shows that \eqref{eq:H1hBoundSquare} holds with $\ra=0$, $\rb= \| VH_0^{-\frac 12} \|^2 < 1$ and that \eqref{eq:H1hBound}
holds with $\rsa=0$, $\rsb= \| VH_0^{-\frac 12} \| < 1$.

ii) $\Rightarrow$ i), iii) $\Rightarrow$ i), iv) $\Rightarrow$ i): All implications are obvious from the estimates in~\eqref{eq:normS}.
\end{proof}

The next lemma shows that if conditions \ref{A1}, \ref{A3} are satisfied, then, for every $\la\in\C$, there is a well-defined self-adjoint operator $T(\la)$ associated with the formal operator sum $H_0 - (\lambda - V)^2$ in the abstract Klein-Gordon spectral problem \eqref{eq:KG}.

\begin{lemma}
   \label{lemma:form}
   Assume that conditions {\rm \ref{A1}}, {\rm \ref{A3}} hold. Then, for every $\lambda\in\C$, 
   there exists a unique closed sectorial operator $T(\la)$ in $\CH$ such that 
   \begin{align*}
      \big( T(\la) x,y \big) =
      \scalar{ H_0^{\frac{1}{2}}x }{ H_0^{\frac{1}{2}}y }
      - \scalar{ (\lambda-V)x }{ (\cc\lambda-V)y} =: \ft(\lambda)[x,y]
   \end{align*}
   for all $x$, $y\in\mD(H_0^{\frac{1}{2}})=: \mD(\ft(\lambda))=:\mD_\ft$. The corresponding operator function  $T$ has the following properties:
   \begin{itemize}
    \item[{\rm i)}]  $T$ is self-adjoint, i.e.\ $T(\la)^*=T(\overline\la)$ for $\la \in\C$, 
    and for $\la \in \R$ the self-adjoint operators $T(\la)$ are bounded from below; \vspace{1mm}
    \item[{\rm ii)}]  the domains $\mD(T(\la))=: \mD_T$ are independent of $\la \in \C$ and
\begin{align}
\label{T-id}
  T(\lambda) = T(\mu) + 2 (\lambda-\mu) (V - \mu) - (\lambda-\mu)^2, \quad \lambda,\mu\in \C;
\end{align}
    \item[{\rm iii)}] $T$ is analytic with derivatives given by
    \begin{align*}
      \hspace{1cm} 
      T'(\lambda)x\, =\, 2(V-\lambda)x,
      \hspace{2ex}
      T''(\lambda)x\, =\, -2 x,
      \hspace{2ex}
      T^{(j)}(\lambda)x\, =\, 0,\ j=3,4, \dots, 
     \end{align*}
   for $x\in\mD(T(\lambda))$.
  \end{itemize}
\end{lemma}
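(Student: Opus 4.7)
The plan is to realize each operator $T(\lambda)$ via the first representation theorem applied to the sesquilinear form $\ft(\lambda)$ on the $\lambda$-independent domain $\mD_\ft := \mD(H_0^{\frac 12})$. Using that $V$ is symmetric and that $\mD_\ft \subseteq \mD(V)$ by~\ref{A1}, so that $(Vx,y)=(x,Vy)$ for $x,y\in\mD_\ft$, one first rewrites
\[
   \ft(\lambda)[x,y] \,=\, \ft_0[x,y] \,-\, \lambda^2 (x,y) \,+\, 2\lambda\,(Vx,y) \,-\, (Vx,Vy),
\]
where $\ft_0[x,y]:=(H_0^{\frac 12}x,H_0^{\frac 12}y)$ is the closed positive form of $H_0$ with the same domain $\mD_\ft$. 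This makes the $\lambda$-dependence of $\ft(\lambda)$ manifestly polynomial and isolates the perturbation $\fb(\lambda):=\ft(\lambda)-\ft_0$.

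To invoke the first representation theorem (\cite[Thm.~VI.2.1]{kato}), the main step is to show that $\ft(\lambda)=\ft_0+\fb(\lambda)$ is closed and sectorial. Assumption~\ref{A3}, i.e.\ $\|Vx\|^2\le \ra\|x\|^2+\rb\,\ft_0[x]$ with $\rb<1$, combined with an $\epsilon$-weighted Cauchy-Schwarz on the cross term $2\lambda(Vx,x)$, yields for any $\epsilon>0$ the bound
\[
   |\fb(\lambda)[x]| \,\le\, (1+\epsilon)\rb\,\ft_0[x] \,+\, \bigl( (1+\epsilon)\ra + (1+\epsilon^{-1})|\lambda|^2\bigr)\|x\|^2.
\]
Choosing $\epsilon$ with $(1+\epsilon)\rb<1$ (possible since $\rb<1$) shows that $\fb(\lambda)$ is $\ft_0$-form-bounded with bound~$<1$, a KLMN-type condition; a direct computation of $\re\ft(\lambda)[x]$ and $\Im\ft(\lambda)[x]$ after splitting $\lambda$ into real and imaginary parts then furnishes a sector estimate $|\Im\ft(\lambda)[x]|\le M_\lambda(\re\ft(\lambda)[x]+\gamma_\lambda\|x\|^2)$ for some $\gamma_\lambda\in\R$, $M_\lambda<\infty$. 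Closedness follows because the graph norm of $\ft(\lambda)$ is equivalent to that of $\ft_0$ on $\mD_\ft$. For $\lambda\in\R$, $\ft(\lambda)$ is manifestly symmetric and bounded below, so $T(\lambda)$ is self-adjoint and bounded from below.

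Properties~i)--iii) then follow from form identities. For~i), one checks $\overline{\ft(\lambda)[x,y]}=\ft(\cc\lambda)[y,x]$ directly from the expansion above using $(Vx,x)\in\R$; the first representation theorem then gives $T(\lambda)^*=T(\cc\lambda)$. For~ii), expansion on $\mD_\ft$ yields
\[
   \ft(\lambda)[x,y]-\ft(\mu)[x,y] \,=\, 2(\lambda-\mu)\bigl((V-\mu)x,\,y\bigr) \,-\, (\lambda-\mu)^2 (x,y),
\]
and since $V-\mu$ maps $\mD_\ft$ into $\CH$ by~\ref{A1}, for $x\in\mD(T(\mu))$ the right-hand side equals $(B_{\lambda,\mu}x,y)$ with $B_{\lambda,\mu}:=2(\lambda-\mu)(V-\mu)-(\lambda-\mu)^2$; hence $x\in\mD(T(\lambda))$ and $T(\lambda)x=T(\mu)x+B_{\lambda,\mu}x$. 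The reverse inclusion is symmetric, establishing domain independence; part~iii) follows by direct differentiation of this polynomial-in-$\lambda$ identity on the common domain~$\mD_T$.

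The main technical obstacle is the sectoriality estimate: the term $-(Vx,Vy)$ is of the same order as $\ft_0$, so the strict inequality $\rb<1$ in~\ref{A3} is precisely what allows absorbing it into $\ft_0$ while leaving room for the $\lambda$-dependent cross-terms.
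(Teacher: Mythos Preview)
Your proposal is correct and follows essentially the same approach as the paper: both show that the perturbation of the closed positive form $\ft_0$ is $\ft_0$-form-bounded with bound $<1$ via an $\epsilon$-weighted estimate based on~\ref{A3}, invoke the first representation theorem (and KLMN for real $\lambda$), and then derive~ii) and~iii) from the form identity $\ft(\lambda)-\ft(\mu)=2(\lambda-\mu)\bigl((V-\mu)\,\cdot\,,\cdot\bigr)-(\lambda-\mu)^2(\cdot,\cdot)$. The only cosmetic difference is that the paper keeps the perturbation grouped as $\fv(\lambda)[x,y]=\bigl((\lambda-V)x,(\cc\lambda-V)y\bigr)$ and bounds $\|(V-\lambda)x\|$ directly, whereas you expand $\ft(\lambda)$ as a polynomial in $\lambda$ from the outset; for the domain independence in~ii) the paper appeals to \cite[Thm.~VI.1.33]{kato} (the cross term has $\ft(\mu)$-bound~$0$), while you argue directly via the characterization of $\mD(T(\lambda))$ in the first representation theorem---both arguments are equivalent.
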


\begin{proof}
i) The operator $H_0$ is positive and self-adjoint and hence the corresponding form given by $\fh_0[x,y]:=\scalar{ H_0^{\frac{1}{2}}x }{ H_0^{\frac{1}{2}}y }$, $x$, $y\in\mD(H_0^{\frac{1}{2}})$, is closed and positive.
The form $\fv(\la)$ given by $\fv(\la)[x,y]:=\scalar{ (\lambda-V)x }{ (\cc\lambda-V)y}$, $x,y \in \mD(V)$, is sectorial, and it is symmetric if $\la\in \R$. 
By \ref{A3} and Remark \ref{rem:kato}, there exist $\rsa$, $\rsb\ge 0$, $\rsb <1$, such that 
\[
   \| (V-\la)x \| \le \|Vx\| + |\la| \, \|x\| \le \big(\rsa + |\la|\big) \, \|x \| + \rsb \,\big\|H_0^{\frac{1}{2}} x\big\|,
   \quad x\in\mD(H_0^{\frac{1}{2}}).
\]
Using Remark \ref{rem:kato} again, we see that, for $x\in\mD(H_0^{\frac{1}{2}})$,
\begin{align*}
   \big|\big((V-\la)x,(V-\overline{\la})x\big)\big| & \le \|(V-\la)x\| \, \|(V-\overline{\la})x\| = \|(V-\la)x\|^2 \\
   & \le  (1+\varepsilon^{-1})(\rsa +|\la|)^2\!\, \|x \|^2 + (1+\varepsilon) \rsb^2 \,\big(H_0^{\frac{1}{2}} x,H_0^{\frac{1}{2}} x \big)
\end{align*}
with arbitrary $\varepsilon >0$.
Choosing $\varepsilon < 1/\rsb^2-1$, we see that the form $\fv(\la)$ is $\fh_0$-bounded with $\fh_0$-bound~$<1$. Hence, 
for $\la\in \C$, the form $\ft(\la)$ is closed and sectorial and the existence of $T(\la)$ follows from the first representation theorem (see e.g.\ \cite[Thm.\ VI.2.1]{kato}). 
Since $\ft(\la) = \ft^*(\overline\la)$ for $\la \in \C$, the operator function $T$ is self-adjoint.
The remaining claims for $\la \in \R$ (and, in fact, also the existence of $T(\la)$ for $\la\in \R$) follow
from the so-called KLMN theorem (due to Kato-Lions-Lax-Milgram-Nelson, see \cite[Thm.\ X.16]{RSII} or \cite[Sect.\ 2.1]{MR2159561}) and from \cite[Thm.\ V.4.11]{kato}. 

ii) It is easy to see that, for  $\mu$, $\la\in \R$ and $x$, $y\in \mD_\ft=\mD(H_0^{\frac{1}{2}})$,
   \begin{align}
      \label{eq:FormExpansion}
      \ft(\lambda)[x,y] = 
      \ft(\mu)[x,y]
      + 2 (\lambda-\mu) \scalar{ (V-\mu)x }{ y }
      - (\lambda-\mu)^2 \scalar{ x }{ y}.
   \end{align}
In particular, the operator $T(\lambda)$ associated with the form on the left hand side coincides with the operator which is associated with the sum of forms on the right hand side; this operator has the same domain as the operator $T(\mu)$ associated with the first term $\ft(\mu)$ because $\scalar{ (V-\mu)x }{ y }$ defines a $\ft(\mu)$-bounded form with $\ft(\mu)$-bound $0$ and $ \scalar{ x }{ y}$ is a bounded form (see \cite[Thm.\ VI.1.33]{kato}).

iii) The formulas for the derivatives of $T$ follow from the identity \eqref{T-id}.
\end{proof}

\begin{corollary}
\label{cor:op-pol}
If we denote by $H_0  \dot - V^2:= T(0)$ the operator form sum corresponding to $\ft(0)$, then
\[
 T(\la) = H_0 \dot - V^2 + 2 \la V - \la^2,  \quad \mD(T(\la)) = \mD_T = \mD(H_0 \dot - V^2),   \quad \la \in \C.
\]
\end{corollary}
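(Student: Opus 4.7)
The plan is to reduce the corollary directly to Lemma \ref{lemma:form}, in particular to the domain-independence statement and the identity \eqref{T-id}. The point is that the formulas in the corollary are nothing more than specialising $\mu=0$ in \eqref{T-id} and recalling what $T(0)$ is by construction.

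First I would observe that, by the definition of $\ft(\la)$ in Lemma \ref{lemma:form}, the form $\ft(0)$ is given by
\[
  \ft(0)[x,y]=\bigscalar{H_0^{\frac 12}x}{H_0^{\frac 12}y}-\bigscalar{Vx}{Vy},\qquad x,y\in\mD(H_0^{\frac 12}),
\]
which is precisely the form sum of $\fh_0$ and $-\fv$ with $\fv[x,y]=(Vx,Vy)$; this is the form for which the associated self-adjoint operator is, by definition, the form sum $H_0\,\dot-\,V^2$. Hence $T(0)=H_0\,\dot-\,V^2$ and $\mD(T(0))=\mD(H_0\,\dot-\,V^2)$.

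Next I would apply Lemma \ref{lemma:form} ii): the identity \eqref{T-id} with $\mu=0$ reads
\[
  T(\la)=T(0)+2\la\,V-\la^2=H_0\,\dot-\,V^2+2\la V-\la^2,\qquad\la\in\C,
\]
which is the stated formula. Here $2\la V$ acts on $\mD_T$, which is admissible because $\mD_T\subseteq\mD_\ft=\mD(H_0^{\frac 12})\subseteq\mD(V)$ by \ref{A1}, so the right-hand side is well defined on $\mD_T$.

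Finally, Lemma \ref{lemma:form} ii) also says that $\mD(T(\la))$ is independent of $\la$; specialising to $\la=0$ gives $\mD_T=\mD(T(0))=\mD(H_0\,\dot-\,V^2)$, completing the claim. There is no real obstacle here: the corollary is a clean reformulation of Lemma \ref{lemma:form} ii) in which one names $T(0)$ by its form-theoretic meaning.
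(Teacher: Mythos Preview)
Your proposal is correct and takes exactly the same approach as the paper: the paper's proof is the single sentence ``The claim is immediate from Lemma \ref{lemma:form} ii) with $\mu=0$,'' and your argument simply spells out this specialisation together with the observation that $T(0)$ is by definition the form sum $H_0\,\dot-\,V^2$.
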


\begin{proof}
The claim is immediate from Lemma \ref{lemma:form} ii) with $\mu=0$.
\end{proof}

\begin{remark}
Note that, e.g.\ if the operator $V$ is bounded, then $T(0) = H_0 - V^2$ may be defined as an operator sum; however, the operator form sum $T(0)=H_0  \dot - V^2$ may even be defined if $\mD(H_0) \cap \mD(V^2) = \{0\}$. 

In particular cases, e.g.\ for the Klein-Gordon equation in $\R^3$, it may even be possible to define the operators $T(\la)$ without using assumption \ref{A3} by means of the Leinfelder-Simader theorem \cite[Theorem 4]{LeSi81}.
\end{remark}

By assumption \ref{A1}, the operator $S:= VH_0^{-\frac 12}$ is bounded and hence the quadratic operator polynomial  
$L$ in the Hilbert space~$\CH$, given by
\begin{align}\label{eq:Ldef}
   L(\lambda) 
   = \id - \big(S^* - \lambda H_0^{-\frac{1}{2}}\big) \big(S - \lambda H_0^{-\frac{1}{2}}\big), \quad \la\in \C,
\end{align}
has bounded coefficients. However, the numerical range $W(L)$ of $L$ is not bounded  since $L$ is not monic and its leading coefficient $-H_0^{-1}$ is not bounded away from $0$. 

The following relation between the operator polynomials $T$ and $L$ was proved in \cite{LT06}; a similar factorisation may be found in 
\cite[Equ.\ (4.9)]{Ves83}.

\begin{proposition}
\label{T-L}
Suppose that assumption {\rm \ref{A1}} holds, i.e.\ $\mD(H_0^{\frac{1}{2}}) \subseteq \mD(V)$. Then
\begin{align}
   \label{eq:L}
   T(\lambda)\, =\, H_0^{\frac{1}{2}} L(\lambda) H_0^{\frac{1}{2}},
   \quad \lambda\in\C,
\end{align}
and we have
\begin{enumerate}
\item[{\rm i)}]
$\pointspec(T) \subset W(T) \subset W(L)=W(\ft)$;
\item[{\rm ii)}]
$\spec(T) \subset \spec(L)$, \ $\pointspec(T)=\pointspec(L)$;
\item[{\rm iii)}]
$\spec(T)\cap\R = \spec(L)\cap\R$, \ $\essspec(T)\cap\R = \essspec(L)\cap\R$.
\end{enumerate}
\end{proposition}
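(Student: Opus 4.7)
The plan is to first establish the factorisation (\ref{eq:L}) and then deduce the three spectral assertions from it, using crucially that $H_0^{\frac 12}$ is a bijection from $\mD(H_0^{\frac 12})$ onto $\CH$ with bounded inverse $H_0^{-\frac 12}$ (since $H_0\ge m^2>0$). For (\ref{eq:L}) I would compute directly for $x,y\in\mD(H_0^{\frac 12})$,
\[
   \big(L(\lambda)H_0^{\frac 12}x,\, H_0^{\frac 12}y\big)
   = \big(H_0^{\frac 12}x,\, H_0^{\frac 12}y\big) - \big((V-\lambda)x,\, (V-\overline\lambda)y\big)
   = \ft(\lambda)[x,y],
\]
using $SH_0^{\frac 12}=V$ on $\mD(H_0^{\frac 12})$ and $(S^*-\lambda H_0^{-\frac 12})^*=S-\overline\lambda H_0^{-\frac 12}$. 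Combined with the first representation theorem from Lemma~\ref{lemma:form}, this identifies $T(\lambda)$ with $H_0^{\frac 12}L(\lambda)H_0^{\frac 12}$ on the natural domain $\mD(T(\lambda))=\{x\in\mD(H_0^{\frac 12}):L(\lambda)H_0^{\frac 12}x\in\mD(H_0^{\frac 12})\}$.

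Claim (i) is then routine. The inclusion $\pointspec(T)\subset W(T)$ is immediate from the definitions, $W(T)\subset W(\ft)$ follows from $\ft(\lambda)[x]=(T(\lambda)x,x)$ on $\mD(T(\lambda))$, and $W(\ft)=W(L)$ is obtained by substituting $y=H_0^{\frac 12}x$ in the displayed identity and invoking the bijection between $\mD(H_0^{\frac 12})\setminus\{0\}$ and $\CH\setminus\{0\}$. For (ii), if $\lambda\in\rho(L)$, a direct check shows that $H_0^{-\frac 12}L(\lambda)^{-1}H_0^{-\frac 12}$ is a bounded two-sided inverse of $T(\lambda)$, yielding $\spec(T)\subset\spec(L)$. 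The equality $\pointspec(T)=\pointspec(L)$ reduces via (\ref{eq:L}) and injectivity of $H_0^{\frac 12}$ to the bijection $x\leftrightarrow H_0^{\frac 12}x$ between kernels; the assignment $y\mapsto H_0^{-\frac 12}y\in\mD(T(\lambda))$ is well-defined whenever $L(\lambda)y=0$, because then $L(\lambda)H_0^{\frac 12}(H_0^{-\frac 12}y)=0\in\mD(H_0^{\frac 12})$.

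The main obstacle lies in (iii). For $\lambda\in\R$ both $T(\lambda)$ and $L(\lambda)$ are self-adjoint, so membership in spectrum or essential spectrum is equivalent to the existence of a (singular) Weyl sequence. The inclusions ``$\subset$'' follow by starting with an approximate eigenvector $(x_n)\subset\mD(T(\lambda))$ of $T(\lambda)$: assumption \ref{A3} together with $\ft(\lambda)[x_n]=(T(\lambda)x_n,x_n)\to 0$ controls $\|H_0^{\frac 12}x_n\|$, so $y_n:=H_0^{\frac 12}x_n$ is bounded with $\|y_n\|\ge m$ and $L(\lambda)y_n=H_0^{-\frac 12}T(\lambda)x_n\to 0$; after normalisation this produces an approximate eigenvector for $L(\lambda)$, and weak convergence to $0$ is preserved on the dense set $\mD(H_0^{\frac 12})$ by a standard $\varepsilon/3$ argument together with boundedness of $(y_n)$. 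The reverse inclusions ``$\supset$'' are the delicate part: given $(y_n)\subset\CH$ with $\|y_n\|=1$ and $L(\lambda)y_n\to 0$, the natural candidate $x_n:=H_0^{-\frac 12}y_n$ lies in $\mD(H_0^{\frac 12})$ but may fail to lie in $\mD(T(\lambda))$ because $L(\lambda)y_n$ need not belong to $\mD(H_0^{\frac 12})$. I would overcome this by replacing $(y_n)$ by a nearby sequence $(\widetilde y_n)$ constructed via spectral truncations $P_N:=E_{H_0}([m^2,N])$ of $H_0$ and density of $\mD(H_0^{\frac 12})$, choosing $N=N(n)$ large so that $L(\lambda)\widetilde y_n\in\mD(H_0^{\frac 12})$, $\|\widetilde y_n\|\to 1$, and $L(\lambda)\widetilde y_n\to 0$ are retained; then $H_0^{-\frac 12}\widetilde y_n\in\mD(T(\lambda))$ supplies the required approximate eigenvector for $T(\lambda)$, and is weakly null in the essential-spectrum case by boundedness of $H_0^{-\frac 12}$.
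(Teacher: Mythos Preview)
Your argument for the factorisation \eqref{eq:L} and for parts i) and ii) is correct and agrees with the paper's line of reasoning; in fact the paper only proves i) explicitly and refers to \cite[Prop.~2.3]{LT06} for ii) and iii), so your self-contained treatment of ii) is a genuine addition.

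The problem lies in the reverse inclusions ``$\supset$'' in iii). Your plan is to take a Weyl sequence $(y_n)$ for $L(\lambda)$, approximate it by $(\widetilde y_n)$ with $L(\lambda)\widetilde y_n\in\mD(H_0^{1/2})$, and set $x_n:=H_0^{-1/2}\widetilde y_n\in\mD(T(\lambda))$. Two difficulties are not addressed. First, even if $L(\lambda)\widetilde y_n\to 0$ in $\CH$, the quantity you need to control is
\[
   T(\lambda)x_n \;=\; H_0^{1/2}L(\lambda)\widetilde y_n,
\]
and $H_0^{1/2}$ is unbounded, so there is no reason for this to tend to $0$; your truncation $P_N$ gives no grip on $\|H_0^{1/2}L(\lambda)\widetilde y_n\|$. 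Second, $\|x_n\|=\|H_0^{-1/2}\widetilde y_n\|$ may tend to $0$ even though $\|\widetilde y_n\|\to 1$, so after normalisation the first problem becomes worse, not better. There is also a more basic issue with the truncation itself: $P_N y_n$ lies in $\mD(H_0^{k})$ for all $k$, but $L(\lambda)P_N y_n\in\mD(H_0^{1/2})$ would require $S^*S$ to preserve $\mD(H_0^{1/2})$, which is not guaranteed by \ref{A1} alone since $S=VH_0^{-1/2}$ need not map $\mD(H_0^{1/2})$ into $\mD(V)$.

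A workable route avoids Weyl sequences altogether and argues contrapositively. For real $\lambda$ with $0\in\rho(T(\lambda))$, one uses that $\mD\big((T(\lambda)-\gamma+1)^{1/2}\big)=\mD(\ft(\lambda))=\mD(H_0^{1/2})$ (second representation theorem), so that $H_0^{1/2}$ and $(T(\lambda)-\gamma+1)^{1/2}$ are mutually relatively bounded by the closed graph theorem; this equivalence of form norms, together with the identity $(L(\lambda)y,y)=\ft(\lambda)[H_0^{-1/2}y]$, is what lets one transfer invertibility from $T(\lambda)$ to the bounded self-adjoint operator $L(\lambda)$, and similarly for the Fredholm property. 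This is the mechanism behind the proof in \cite{LT06} to which the paper defers.
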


\begin{proof}
All claims except for i) were proved in \cite[Prop.\ 2.3]{LT06}. 
The first inclusion in i) is obvious. In fact, if $\la_0 \in \pointspec(T)$, then there exists $x_0 \in \mD_T \setminus \{0\}$ such that $T(\la_0)x_0 = 0$. Taking the scalar product with $x_0$ yields $\la_0 \in W(T)$. The second inclusion in i) follows from identity \eqref{eq:L}; the last equality follows from the~relation
\begin{align}
\label{L-ft}
  \big(L(\la)H_0^{\frac 12}x, H_0^{\frac 12}x\big) = \ft(\la)[x], \quad x \in \mD_\ft=\mD(H_0^{\frac{1}{2}}),
\end{align}
and from the fact that $H_0^{\frac 12}$ is bijective. 
\end{proof}

\begin{remark}
In \cite{LNT06}, \cite{LNT08}, and \cite{LT06}, the abstract Klein-Gordon equation was studied under the assumption
\begin{condlist}
   \CondItem{(V3)}\label{A2} 
   $VH_0^{-\frac{1}{2}} =: S = S_0 + S_1$ where $\|S_0\| < 1$ and $S_1$ is compact,
\end{condlist}
which implies condition \ref{A3}. In fact, since $S_1$ is compact, the operator $S_1 H_0^{\frac{1}{2}}$ has $H_0^{\frac{1}{2}}$-bound $0$ and hence, for $\eps < 1-\|S_0\|$,
   there exists an $\ra\ge 0$  such that $\| S_1 H_0^{\frac 12}x\| \le \ra \| x\| + \eps \|H_0^{\frac 12} x\|$ for $x\in\mD(H_0^{\frac 12})$ and~so
   \begin{align*}
      \| Vx\| 
      = \| S_1 H_0^{\frac 12}x + S_0 H_0^{\frac 12}x\|
      \le \ra \,\| x\| + \big(\eps + \|S_0\|\big)\, \|H_0^{\frac 12} x\|.
   \end{align*}
\end{remark}

\smallskip

\section{Semi-simplicity of the eigenvalues}  
\label{sec:reality}

In this section we establish conditions on $V$ guaranteeing that the quadratic operator polynomial $T(\la),\, \la\in \C$, induced by the formal operator sum 
$H_0-(\la-V^2)$ in Section \ref{sec:T} is strongly damped and its spectrum splits into two parts of different type. Clearly, this holds for $V=0$ since in this case $T(\la)=H_0-\la^2$ and hence $\spec(T) = \essspec(T) = (-\infty,-m] \,\dot\cup\,[m,\infty)$. For $V\ne 0$, corresponding conditions for $V$ were established in \cite{LNT06}; for bounded $V\ne 0$, weaker conditions were given in \cite{Naj83} (see also \cite{LT06}).

The notion of strongly damped operator polynomials was first introduced in \cite{MR0069030} in the finite dimensional case; in the infinite dimensional case with bounded coefficients it was elaborated in \cite{MR0169060}, \cite{MR0195291} (see also \cite{MR511976}, \cite{MR511976}), and in \cite{hhabil}; for unbounded coefficients and constant domain, it was considered in \cite{MR933487}. 

\begin{definition}
\label{strongly-damped}
The operator polynomial $T$ defined in Lemma \ref{lemma:form} is called \emph{strongly damped} if, for every $x\in \mD_T$, the quadratic polynomial $\big(T(\cdot)x,x\big)$ on $\R$ has two real and distinct roots.\\
The form polynomial $\ft$ defined in Lemma \ref{lemma:form} is called \emph{strongly damped} if, for every $x\in \mD_\ft$, the quadratic polynomial $\ft(\cdot)[x]$ on $\R$ has two real and distinct roots.
\end{definition}

\begin{remark}
\label{rem:T-ft}
\begin{itemize}
\item[{\rm i)}] If $T$ is strongly damped, then $W(T) \subset \R$.
\item[{\rm ii)}] If $\ft$ is strongly damped, then $T$ is strongly damped (since $\mD_T \subset \mD_\ft$).
\end{itemize}
\end{remark}

The following lemma and its proof generalize a result for strongly damped quadratic operator polynomials $\mL(\la) = \la^2 + \la B + C$, $\la\in \C$, which was proved in  \cite[Behauptung~5.1]{hhabil} for the case of an unbounded self-adjoint coefficient $B$ and bounded $C \ge 0$; for the case of bounded coefficients, a less direct proof may be found in \cite[Thm.\ 31.1]{MR971506}.

\begin{lemma}
\label{strongly-damped-form}
Let $\ft$ be strongly damped and denote the two different real zeros of the quadratic equation $\ft(\la)[x]=0$ by $p_-(x)<p_+(x)$ for $x\in \mD(H_0^{\frac 12})\setminus\{0\}$. If we let
\begin{align}
   \Lambda_- &:= \big\{ p_-(x) : x \in \mD(H_0^{\frac 12}) \setminus\{0\}\big\}, \quad \nu_-:= \sup \Lambda_-, \label{nu-} \\ 
   \Lambda_+ &:= \big\{ p_+(x) : x \in \mD(H_0^{\frac 12})\setminus\{0\} \big\}, \quad \nu_+:= \inf \Lambda_+,   \label{nu+}
\end{align}
then the sets $\Lambda_-$ and $\Lambda_+$ are disjoint; in particular, $\nu_- \le \nu_+$. 
\end{lemma}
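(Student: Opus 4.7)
The plan is to establish the stronger statement $\Lambda_- \cap \Lambda_+ = \emptyset$ by contradiction, and then deduce $\nu_- \le \nu_+$ from a short connectedness argument. I would start by supposing that there is some $\mu \in \Lambda_- \cap \Lambda_+$, pick $x, y \in \mD(H_0^{\frac{1}{2}}) \setminus \{0\}$ with $p_-(x) = \mu = p_+(y)$, and observe that $x$, $y$ must be linearly independent: the strong-damping inequality $p_-(z) < p_+(z)$ combined with the scale invariance $p_\pm(cv) = p_\pm(v)$ (immediate from $\ft(\lambda)[cv] = |c|^2 \ft(\lambda)[v]$) would otherwise be violated.

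The crucial step is to exhibit a continuous real one-parameter family of non-zero null vectors in $\operatorname{span}(x,y)$ that connects $x$ to a scalar multiple of $y$. Expanding the sesquilinear form and using $\ft(\mu)[x] = \ft(\mu)[y] = 0$ gives
\[
   \ft(\mu)[\alpha x + \beta y] \,=\, 2\,\re(\alpha\bar\beta\,\omega), \quad \omega := \ft(\mu)[x,y].
\]
I would therefore set $u_\theta := \cos\theta\,x + \sin\theta\,c\,y$ with a non-zero complex constant $c$ chosen so that $\bar c\,\omega \in i\R$; concretely $c = 1$ if $\omega = 0$ and $c = i/\bar\omega$ otherwise. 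A direct computation then gives $\ft(\mu)[u_\theta] = 2\cos\theta\sin\theta\,\re(\bar c\,\omega) = 0$ for every $\theta \in [0,\pi/2]$, and the linear independence of $x$ and $y$ guarantees $u_\theta \neq 0$ throughout.

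By strong damping, $\ft(\cdot)[u_\theta]$ has two distinct real roots for each $\theta$, one of which is $\mu$; I call the other $p_\ast(\theta)$. Vieta's formula yields $p_\ast(\theta) + \mu = 2(Vu_\theta, u_\theta)/\|u_\theta\|^2$, which is a continuous function of $\theta$ because $V$ is $H_0^{\frac{1}{2}}$-bounded under assumption \ref{A3}. At the endpoints, $u_0 = x$ gives $p_\ast(0) = p_+(x) > \mu$, while $u_{\pi/2} = c y$ gives $p_\ast(\pi/2) = p_-(y) < \mu$ by scale invariance. The intermediate value theorem then produces $\theta^\ast \in (0,\pi/2)$ with $p_\ast(\theta^\ast) = \mu$, so $\mu$ is a double root of the strongly damped polynomial $\ft(\cdot)[u_{\theta^\ast}]$, the desired contradiction.

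To conclude $\nu_- \le \nu_+$, I would note that $p_\pm$ are continuous on the path-connected set $\mD(H_0^{\frac{1}{2}}) \setminus \{0\}$, so $\Lambda_\pm$ are intervals in $\R$; for any fixed $x_0 \ne 0$ one has $p_-(x_0) < p_+(x_0)$, forcing the interval $\Lambda_-$ to lie entirely to the left of the disjoint interval $\Lambda_+$. The hardest step to get right is the construction of the constant $c$: the null cone of $\ft(\mu)$ on $\operatorname{span}(x,y)$ is all of it when $\omega = 0$ but only a real line when $\omega \neq 0$, so a purely real convex combination $\cos\theta\,x + \sin\theta\,y$ would fail in the generic case, and one is forced to use a genuinely complex phase on $y$ to stay inside the null cone.
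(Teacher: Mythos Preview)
Your argument is correct and follows the same overall strategy as the paper: assume $p_-(x)=\mu=p_+(y)$, build a one-parameter family in $\operatorname{span}\{x,y\}$, and use the intermediate value theorem to locate a vector for which the quadratic $\ft(\cdot)[\,\cdot\,]$ fails to have two distinct real roots.

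The implementation differs in one interesting respect. You rotate $y$ by a complex phase $c$ so that $\ft(\mu)[u_\theta]=0$ \emph{exactly} along the whole path, and then track the ``other root'' $p_\ast(\theta)$ via Vieta; a double root at $\mu$ gives the contradiction. The paper instead replaces $x$ by $-x$ if necessary so that $\re\,\ft(\lambda_0)[x,y]\le 0$, uses the \emph{real} combinations $z(t)=tx+(1-t)y$, and is content with $\ft(\lambda_0)[z(t)]\le 0$; it then applies the intermediate value theorem to the $\lambda$-derivative $h(t)=\frac{\rd}{\rd\lambda}\ft(\lambda)[z(t)]\big|_{\lambda_0}$ to find a $t_0$ where the downward-opening parabola has its vertex at $\lambda_0$ with non-positive value, hence at most one real root. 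So your closing remark that ``a purely real convex combination would fail in the generic case'' is true for \emph{your} argument (which needs equality), but the paper's version shows that real combinations do suffice if one relaxes to an inequality. Your Vieta/other-root formulation and the paper's derivative formulation are of course equivalent, since $h(t_0)=0$ exactly when the sum of the roots equals $2\lambda_0$. Your separate connectedness argument for $\nu_-\le\nu_+$ is a welcome addition; the paper leaves this step implicit.
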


\begin{proof}
Assume, to the contrary, that there exist elements $x,y \in \mD(H_0^{\frac 12})\setminus\{0\}$ with 
\begin{equation}
\label{disjoint}
  \la_0:=p_-(x) = p_+(y).
\end{equation}
Then, by the assumption on $\ft$, we have
\begin{align}
  & \ft(\la_0)[x]=\ft(\la_0)[y]=0, \label{1}\\
  & \left. \frac{{\rm d}}{{\rm d} \la} \ft(\la)[x] \right|_{\la=\la_0} > 0, \quad 
  \left. \frac{{\rm d}}{{\rm d} \la} \ft(\la)[y] \right|_{\la=\la_0} < 0. \label{2}
\end{align}
Moreover, without loss of generality, we may assume that ${\rm Re} \, \ft(\la_0)[x,y] \le 0$. Otherwise, we may replace $x$ by $-x$ since $p_-(-x)=p_-(x)$; in fact, 
\begin{equation}
\label{eq:factor}
  p_\pm(\alpha w) = p_\pm (w), \quad w\in \mD(H_0^{\frac 12}) \setminus\{0\}, \ \alpha \in \C \setminus \{0\},
\end{equation}
as $\ft(\la)[\alpha w]= |\alpha|^2 \ft(\la)[w]$ and so the two equations $\ft(\la)[\alpha w]=0$ and $\ft(\la)[w]=0$ have the same roots.

Set $z(t):= tx+(1-t)y \in \mD_\ft$, $t\in[0,1]$. First we show that $z(t) \ne 0$, $t\in [0,1]$. Clearly, $z(0)=y\ne 0$ and $z(1)=x\ne 0$. If $z(t)=0$ for some $t_0\in(0,1)$, then $x=(1-t_0)/t_0\, y$
and hence, by \eqref{eq:factor}, it follows~that
$$p_+(y)=p_-(x) = p_-((1-t_0)/t_0\, y) = p_-(y),$$ 
a contradiction to \eqref{disjoint}.

By the definition of $z(t)$ and by \eqref{1}, we see that, for all $t\in [0,1]$,
\begin{align*}
   \ft(\la_0)[z(t)] & =  t^2 \,\ft(\la_0)[x] + (1-t)^2 \, \ft(\la_0)[y] + 2 t(1-t) \, {\rm Re} \, \ft(\la_0)[x,y] \\
                    & = 2 t\,(1-t) \, {\rm Re} \, \ft(\la_0)[x,y] \le 0.
\end{align*} 
Moreover, the function
$$
  h(t):=\left. \frac{{\rm d}}{{\rm d} \la} \ft(\la)[z(t)] \right|_{\la=\la_0}, \quad t\in[0,1],
$$ 
depends continuously on $t$ and, by \eqref{2}, we have $h(0)<0$ and $h(1)>0$. Hence there exists a $t_0\in [0,1]$ such that $h(t_0)=0$. Altogether, we have
shown that the quadratic polynomial $q(\la):= \ft(\la)[z(t_0)]$, $\la\in \R$, satisfies
\[
   q(\la_0)= \ft(\la_0)[z(t_0)] \le 0, \quad q'(\la_0) = \left. \frac{{\rm d}}{{\rm d} \la} \ft(\la)[z(t_0)] \right|_{\la=\la_0} =0.
\]
As $\lim_{\la \to \pm \infty} q(\la) = -\infty$  (see the definition of $\ft$ in Lemma \ref{lemma:form}), it follows that~$q$ is non-positive and possesses at most one real zero, a contradiction to the assumption that $\ft$ is strongly damped.
\end{proof}

\begin{proposition}
\label{prop:gap0} 
The form polynomial $\,\ft$ satisfies the following implications:
\begin{itemize}
\item[{\rm i)}] if $\,\ft$ is strongly damped, then $ \ft(\la) \ge 0$ for all  $\la \in [\nu_-,\nu_+]$.
\item[{\rm ii)}]  if $\,\ft(\la_0) > 0$ for some $\la_0\in\R$, then  $\,\ft$ is strongly damped and $\nu_- \le \la_0 \le \nu_+$.
\end{itemize}
\end{proposition}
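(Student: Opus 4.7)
The plan is to reduce both parts to the pointwise observation that, for each fixed $x \in \mD_\ft \setminus \{0\}$, the real-variable map
\[
   \la \longmapsto \ft(\la)[x] = \big\|H_0^{\frac 12} x\big\|^2 - \big\|(\la - V)x\big\|^2
\]
is a downward-opening quadratic polynomial in $\la$ with leading coefficient $-\|x\|^2 < 0$. Consequently, whenever it has two distinct real roots $p_-(x) < p_+(x)$, it is nonnegative precisely on $[p_-(x), p_+(x)]$ and strictly positive in the interior; conversely, if it attains a strictly positive value at some real $\la_0$, then the downward opening forces it to have two distinct real roots straddling $\la_0$.

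For part (i), I would assume $\ft$ is strongly damped, so each nonzero $x \in \mD_\ft$ produces an interval $[p_-(x), p_+(x)]$ of nonnegativity. For any $\la \in [\nu_-, \nu_+]$ and any such $x$, the definitions \eqref{nu-}, \eqref{nu+} give the chain $p_-(x) \le \nu_- \le \la \le \nu_+ \le p_+(x)$, whence $\ft(\la)[x] \ge 0$. Since $x$ is arbitrary, $\ft(\la) \ge 0$ as a form.

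For part (ii), I would read the hypothesis $\ft(\la_0) > 0$ as $\ft(\la_0)[x] > 0$ for every $x \in \mD_\ft \setminus \{0\}$. The opening remark then yields two distinct real roots $p_-(x) < \la_0 < p_+(x)$ for each such $x$, which is precisely the definition of strong damping. Passing to the supremum in $p_-(x) < \la_0$ and to the infimum in $\la_0 < p_+(x)$ gives $\nu_- \le \la_0 \le \nu_+$. I do not anticipate a genuine obstacle: the argument is elementary single-variable calculus applied pointwise in $x$, and the subtler convex-combination argument of Lemma \ref{strongly-damped-form} is not needed. The only point worth flagging is that the pointwise strict inequality $p_-(x) < \la_0$ relaxes to $\nu_- \le \la_0$ upon taking the supremum, which is exactly why the conclusion of (ii) is stated with non-strict inequalities.
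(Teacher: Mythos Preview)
Your proposal is correct and matches the paper's proof essentially line for line: both parts rest on the observation that $\la \mapsto \ft(\la)[x]$ is a downward-opening quadratic with roots $p_\pm(x)$, together with the chain $p_-(x) \le \nu_- \le \la \le \nu_+ \le p_+(x)$ for part (i) and the intermediate-value/root-counting argument for part (ii). The only small difference is that the paper records the explicit factorisation $\ft(\la)[x] = (\la - p_-(x))(p_+(x)-\la)\,(x,x) \ge (\la - \nu_-)(\nu_+ - \la)\,(x,x)$, a quantitative lower bound that is reused in the next proposition, but this is immediate from your setup.
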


\begin{proof}
i)
Let $\la \in [\nu_-,\nu_+]$ and $x\in \mD_\ft$, $x\ne 0$, be arbitrary. Then, by the definition of $p_\pm(x)$ as the zeros of the quadratic equation $\ft(\la)[x]=0$ and by the definition of $\nu_\pm$ in \eqref{nu-}, \eqref{nu+}, we have
\begin{align}
\label{feb9}
  \ft(\la)[x] = \big( \la - p_-(x) \big) \big( p_+(x) - \la \big) \, (x,x)  \ge (\la - \nu_-)(\nu_+ - \la) \, (x,x) =: \gamma \, (x,x) \hspace*{-2mm}
\end{align}
where $\gamma\ge 0$.

ii) Let $x\in \mD_\ft$, $x\ne 0$. Since $\lim_{\la \to \pm \infty}\ft(\la)[x]  =-\infty$ and $\ft(\la_0)[x] > 0$ by assump\-tion, it follows that $\ft(\la)[x]=0$ has two real zeros $p_\pm(x)$ and $p_-(x) \!<\! \la_0 \!<\! p_+(x)$. By the definition of $\nu_\pm$,
it is immediate that $\nu_- \le \la_0 \le \nu_+$.
\end{proof}

For strongly damped quadratic operator polynomials for which at least one of the two root zones is bounded, it is well-known that the length of every Jordan chain is $1$ (see \cite[p.\ 164]{hhabil} and also \cite[Lemma~30.13]{MR971506}). The proof for two unbounded root zones is similar; we repeat it for its simplicity and for the convenience of the reader.

\begin{theorem}
\label{thm:semi-simple}
If the operator polynomial $\,T$ is strongly damped, then all eigenvalues of $\,T$ are real and semi-simple, 
i.e.\ all Jordan chains of $\,T$ have length~$1$.
\end{theorem}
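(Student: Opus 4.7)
The plan is to exploit strong damping in its most elementary form: for each $x\in\mD_T\setminus\{0\}$, the scalar function $p_x(\la):=\bigl(T(\la)x,x\bigr)$ is a quadratic polynomial in $\la$ with two distinct real roots. Using Lemma~\ref{lemma:form}~ii) with $\mu=0$, on $\mD_T$ one has the representation $T(\la)=T(0)+2\la V-\la^2\id$, so
$$
p_x(\la)\,=\,\bigl(T(0)x,x\bigr)+2\la\,(Vx,x)-\la^2\|x\|^2.
$$
Since $T(0)^*=T(0)$ by Lemma~\ref{lemma:form}~i) and $V$ is symmetric on $\mD_T\subset\mD(V)$ (which follows from \ref{A1} because $\mD_T\subset\mD_\ft=\mD(H_0^{\frac12})\subset\mD(V)$), all three coefficients of $p_x$ are real.

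First I would establish reality of eigenvalues. If $\la_0\in\C$ is an eigenvalue with eigenvector $x_0\in\mD_T\setminus\{0\}$, pairing $T(\la_0)x_0=0$ with $x_0$ gives $p_{x_0}(\la_0)=0$; but the quadratic $p_{x_0}$ has two distinct real roots by strong damping, hence $\la_0\in\R$. Next I would prove that Jordan chains have length $1$ by contradiction: suppose $\la_0$ admitted a Jordan chain $(x_0,x_1)$ of length $2$, so that by \eqref{Jordan},
$$
T(\la_0)x_0=0,\qquad T(\la_0)x_1+T'(\la_0)x_0=0.
$$
Taking the scalar product of the second identity with $x_0$ and using self-adjointness of $T(\la_0)$ (Lemma~\ref{lemma:form}~i)), the first term vanishes because $\bigl(T(\la_0)x_1,x_0\bigr)=\bigl(x_1,T(\la_0)x_0\bigr)=0$, leaving $\bigl(T'(\la_0)x_0,x_0\bigr)=0$. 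But this quantity equals $p_{x_0}'(\la_0)$, and since $\la_0$ is one of two distinct real roots of the quadratic $p_{x_0}$, it must be a simple root, so $p_{x_0}'(\la_0)\ne 0$, a contradiction.

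I do not foresee a major obstacle: once the polynomial $p_x$ is recognized as having real coefficients (where self-adjointness of $T(0)$ and symmetry of $V$ enter), the rest is a scalar-polynomial fact combined with the self-adjoint pairing trick. The only points requiring mild care are verifying that pairing with $T'(\la_0)x_0=2(V-\la_0)x_0$ is legitimate (which holds since $\mD_T\subset\mD(V)$ by the inclusions noted above) and observing that the argument works uniformly for both unbounded root zones, which is all that is needed since we never exploit boundedness of either $\Lambda_\pm$.
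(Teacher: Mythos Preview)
Your proposal is correct and follows essentially the same approach as the paper's proof: reality of eigenvalues via the observation that $(T(\la)x_0,x_0)$ is a quadratic in $\la$ with real coefficients and two distinct real roots, and semi-simplicity by the self-adjoint pairing trick that forces $p_{x_0}'(\la_0)=0$ from a length-$2$ Jordan chain. The paper phrases the reality step slightly more abstractly via $\pointspec(T)\subset W(T)\subset\R$ (Remark~\ref{rem:T-ft}~i)), but this is exactly your argument in different clothing.
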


\begin{proof}
Since $\,T$ is strongly damped, Proposition \ref{T-L} i) and Remark \ref{rem:T-ft} i) imply~that 
$$
  \pointspec(T) \subset W(T) \subset \R.
$$  

Assume that $\lambda_0\in \R$ is an eigenvalue of $T$ that is not semi-simple. Then, by~\eqref{Jordan}, there exist elements
$x_0,\, x_1\in\mD(T(\la_0))=\mD_T$, $x_0 \ne 0$, such that 
$$
   T(\la_0)x_0 = 0, \quad T'(\la_0)x_0 + T(\la_0)x_1 = 0.
$$
This implies that
\begin{align*}
	 & \scalar{T(\la_0)x_0}{x_0}=0, \\
	 & \left. \frac{{\rm d}}{{\rm d} \la} \scalar{T(\la)x_0}{x_0} \right|_{\la=\la_0}\!\!\!
	 = \scalar{T'(\la_0) x_0}{x_0} 
	 = -\scalar{T(\la_0)x_1}{x_0}
	 = -\scalar{x_1}{T(\la_0)x_0}
	 = 0;
\end{align*}
here we have used that $T(\la_0)$ is self-adjoint by Proposition \ref{lemma:form} i) since $\la_0$ is real. 
This shows that the quadratic polynomial $\la \mapsto \scalar{T(\la)x_0}{x_0}$ has a double zero at~$\la_0$, a contradiction to the assumption that $T$ is strongly damped.
\end{proof}

If $T$ is strongly damped, it is not immediate that the whole spectrum of $T$ is real. The reason for this is that we only have a spectral inclusion theorem for analytic operator functions with bounded coefficients (see \cite[Thm.\ 26.6]{MR971506}). Using the quadratic form polynomial $\ft$, we shall now show that $\spec(T) \subset \overline{W(T)} \subset \overline{W(\ft)} \subset \R$ if~$\nu_- < \nu_+$. 

The following definiteness properties were proved in \cite[Lemma~31.15]{MR971506} for operator polynomials with bounded coefficients (see also the original work \cite[Abschnitt~II.5.1]{hhabil}).

\begin{proposition}
\label{prop:gap} 
The form polynomial $\,\ft$ satisfies the following implications:
\begin{itemize}
\item[{\rm i)}] if $\,\ft$ is strongly damped with $\nu_- < \nu_+$, then $ \ft(\la) \gg 0$ for all $\la \in (\nu_-,\nu_+)$, and $\ft(\nu_\pm) \ge 0$.
\item[{\rm ii)}]  if $\,\ft(\la_0) \gg 0$ for some $\la_0\in\R$, then  $\,\ft$ is strongly damped with $\nu_- < \la_0 < \nu_+$. 
\end{itemize}
\end{proposition}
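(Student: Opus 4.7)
My plan is to reduce both parts to Proposition~\ref{prop:gap0}. For part~(i), inequality~\eqref{feb9} in the proof of Proposition~\ref{prop:gap0}(i) already yields
\[
   \ft(\la)[x] \,\ge\, (\la - \nu_-)(\nu_+ - \la)\,\|x\|^2, \quad \la\in[\nu_-,\nu_+],\ x \in \mD_\ft;
\]
so for $\la$ in the open interval $(\nu_-,\nu_+)$ the coefficient $(\la - \nu_-)(\nu_+ - \la)$ is strictly positive, which is precisely the statement $\ft(\la)\gg 0$, and at the endpoints it vanishes, giving $\ft(\nu_\pm)\ge 0$.

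For part~(ii), since $\ft(\la_0)\gg 0$ entails the pointwise inequality $\ft(\la_0)[x]>0$ for all $x\ne 0$, Proposition~\ref{prop:gap0}(ii) already implies that $\ft$ is strongly damped with $\nu_-\le\la_0\le\nu_+$. To upgrade these to strict inequalities, my strategy is to show that uniform positivity persists on an open neighborhood of $\la_0$, i.e., there exists $\delta>0$ with $\ft(\la)\gg 0$ for every $\la \in (\la_0-\delta,\la_0+\delta)$. Applying Proposition~\ref{prop:gap0}(ii) to each such $\la$ will give $\la \in [\nu_-,\nu_+]$, so $(\la_0-\delta,\la_0+\delta)\subset[\nu_-,\nu_+]$ and therefore $\nu_-<\la_0<\nu_+$.

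To establish this local uniform positivity I would use the expansion~\eqref{eq:FormExpansion},
\[
   \ft(\la)[x] = \ft(\la_0)[x] + 2(\la-\la_0)\bigl((V-\la_0)x,x\bigr) - (\la-\la_0)^2\|x\|^2,
\]
bound the cross term by Cauchy--Schwarz and Young's inequality with a small parameter $\eta>0$, and exploit the identity $\|(V-\la_0)x\|^2 = \|H_0^{\frac 12}x\|^2 - \ft(\la_0)[x]$. The critical step is then to bound $\|H_0^{\frac 12}x\|^2$ from above in terms of $\ft(\la_0)[x]$ and $\|x\|^2$: writing $\|(V-\la_0)x\|^2\le(1+\eps)\|Vx\|^2+(1+\eps^{-1})\la_0^2\|x\|^2$ and invoking~\ref{A3} one obtains
\[
   \|H_0^{\frac 12}x\|^2 \,\le\, \frac{1}{1-(1+\eps)b}\Bigl(\ft(\la_0)[x] + C_\eps\,\|x\|^2\Bigr)
\]
for any $\eps>0$ with $(1+\eps)b<1$, where $C_\eps=(1+\eps)a+(1+\eps^{-1})\la_0^2$. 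Substituting back, using $\ft(\la_0)[x]\ge\gamma_0\|x\|^2$, and choosing first $\eta$ small enough to absorb the contribution of $\|H_0^{\frac 12}x\|^2$ and then $|\la-\la_0|$ small enough, produces the required lower bound $\ft(\la)[x]\ge\gamma\|x\|^2$ with some $\gamma>0$.

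The main obstacle is the unboundedness of $V$: in the bounded-coefficient case of~\cite[Lem.~31.15]{MR971506}, $T(\la)$ depends norm-continuously on $\la$, so uniform positivity at $\la_0$ automatically propagates to a neighborhood; here this fails because the derivative $T'(\la_0)=2(V-\la_0)$ is unbounded, which forces the form-level absorption argument above. It is exactly the strict inequality $b<1$ in~\ref{A3} that keeps $1-(1+\eps)b$ positive for small $\eps$ and makes the absorption work; were $b=1$ permitted, the argument would collapse.
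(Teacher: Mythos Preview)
Your proof is correct. Part~(i) is identical to the paper's argument. For part~(ii), both you and the paper reduce to showing that $\ft(\la)\gg 0$ persists on an open neighborhood of $\la_0$ and then invoke Proposition~\ref{prop:gap0}(ii); the difference is in how this local uniform positivity is established. The paper argues by contradiction: assuming sequences $\mu_n\to\la_0$ and unit vectors $x_n$ with $\ft(\mu_n)[x_n]<\gamma$, the expansion~\eqref{eq:FormExpansion} forces $|(Vx_n,x_n)|\to\infty$, hence $\|H_0^{1/2}x_n\|\to\infty$ by~\ref{A3}, and then a second application of~\ref{A3} shows the right-hand side of the resulting inequality tends to $+\infty$, a contradiction. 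Your direct route---bounding the cross term via Young's inequality, using $\|(V-\la_0)x\|^2=\|H_0^{1/2}x\|^2-\ft(\la_0)[x]$, and absorbing $\|H_0^{1/2}x\|^2$ back into $\ft(\la_0)[x]$ by means of~\ref{A3} with $(1+\eps)b<1$---is a constructive version of the same mechanism: in both proofs the crucial point is that~\ref{A3} with $b<1$ lets one control $\|H_0^{1/2}x\|^2$ by $\ft(\la_0)[x]$ and $\|x\|^2$. Your argument has the minor advantage of yielding an explicit $\delta$ in terms of $\gamma_0$, $a$, $b$, $\la_0$, while the paper's sequence argument is slightly shorter to write down.
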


\begin{proof}
i) Since $\ft $ is strongly damped, Proposition \ref{prop:gap0} i) shows that $\ft(\la) \ge 0$ for all $\la\in [\nu_-,\nu_+]$ . If $\nu_- < \nu_+$ and 
$\la \in (\nu_-,\nu_+)$, then $\gamma = (\la - \nu_-)(\nu_+ - \la) > 0$ in \eqref{feb9} and hence $\ft(\la) \ge \gamma > 0$.

ii) By Proposition \ref{prop:gap0} ii), $\ft$ is strongly damped. To prove that $\nu_-<\la_0<\nu_+$, we show that for every $\gamma \in (0,\gamma_0)$ there exists an $\varepsilon>0$ such that $\ft(\la)\ge \gamma >0$ for all $\la \in (\la_0-\varepsilon,\la_0+\varepsilon)$; then Proposition~\ref{prop:gap0}~ii) implies that $\nu_- \le \la_0-\varepsilon < \la_0+\varepsilon \le \nu_+$.
Let $\gamma \in (0,\gamma_0)$ and assume, to the contrary, that there exist sequences $(\mu_n)_{n\in\N}\subset (0,\infty)$, $\mu_n\to \la_0$ for $n\to\infty$, and $(x_n)_{n\in\N} \subset \mD_\ft$, $\|x_n\|=1$, such that 
$$
 \gamma > \ft(\mu_n)[x_n], \quad  n\in\N.
$$
Set $\delta_n := 2(\mu_n-\la_0)\la_0 + (\mu_n - \la_0)^2$, $n\in \N$. Then, by~\eqref{eq:FormExpansion},
\begin{align}
\label{eq:VUnbounded}
 	 \gamma + \delta_n 
 	 &> \ft(\mu_n)[x_n] + 2\,(\mu_n-\la_0)\la_0 + (\mu_n - \la_0)^2 \nonumber \\
 	 & = \ft(\la_0)[x_n] + 2\,(\mu_n-\la_0) \,(Vx_n,x_n) \nonumber \\
 	 &\ge \ft(\la_0)[x_n] - 2\,|\mu_n - \la_0| \,\big| \scalar{Vx_n}{x_n} \big| \\
 	 &\ge \gamma_0 - 2\,|\mu_n - \la_0| \,\big| \scalar{Vx_n}{x_n} \big|. \nonumber 
\end{align} 	 
If $|(Vx_n,x_n)|$ were bounded,
the left hand side in the above inequalities \eqref{eq:VUnbounded} would tend to $\gamma$, while the right hand side would tend to $\gamma_0>\gamma$, a contradiction. Hence $|(Vx_n,x_n)| \to \infty$ for $n\to\infty$. By the Cauchy-Schwarz inequality and assumption \ref{A3}, there exist $\ra, \rb \ge 0$, $\rb<1$, such that
\[
   \big|(Vx_n,x_n) \big|^2 \le \|Vx_n\|^2 \le \ra \, \|x_n \|^2 + \rb \,\big\|H_0^{\frac{1}{2}} x_n\big\|^2, \quad n\in \N,
\]
which implies that also $\big\|H_0^{\frac{1}{2}} x_n\big\| \to \infty$ for $n\to\infty$. 

By \eqref{eq:VUnbounded}, the definition of $\ft$ in Lemma~\ref{lemma:form}, and the above inequality, it follows~that
\begin{align}
 	 \gamma + \delta_n
 	 &\ge \| H_0^{\frac{1}{2}} x_n\|^2 - \| (V-\lambda_0) x_n\|^2 - 2\, |\mu_n - \la_0| \, \big| \scalar{Vx_n}{x_n}\big| \nonumber \\
 	 \label{eq:rhsInfty}
 	 & \ge \| H_0^{\frac{1}{2}} x_n\|^2 - \| V x_n\|^2 - 2 |\la_0|\, \| V x_n\| - |\la_0|^2 - 2 \, |\mu_n - \la_0| \, \|Vx_n\|\\
   & \ge (1-\rb)\, \| H_0^{\frac{1}{2}} x_n\|^2 -\ra 
   - 2 ( |\la_0| + |\mu_n - \la_0| )\, \big( \sqrt{\ra} + \sqrt{\rb} \,\| H_0^{\frac{1}{2}} x_n\| \big) - \la_0^2. \hspace{-6mm} \nonumber
\end{align}
Since $\rb<1$ and $\| H_0^{\frac{1}{2}} x_n\|\to\infty$ for $n\to\infty$, the right hand side of the inequalities \eqref{eq:rhsInfty} tends to $\infty$, whereas the left hand side tends to $\gamma$, a contradiction.
%
\end{proof}

%


The proof of the following theorem relies implicitly on the Langer factorisation theorem on quadratic operator polynomials (see \cite[Abschnitt II.3]{hhabil} or \cite{MR0229072}) which is the main ingredient for the two lemmas from \cite{MR933487} that we use.

\begin{theorem}
\label{thm:specincl}
If $\,\ft$ is strongly damped and $\nu_- < \nu_+$, then 
\begin{align}
\label{eq:gap}
 \spec(T)  \subset (-\infty,\nu_-] \,\dot\cup \, [\nu_+,\infty) \subset \R
\end{align}
and $\nu_\pm \in \spec(T)$; if $\nu_\pm \in W(\ft)$, then $\nu_\pm \in \pointspec(T)$.
\end{theorem}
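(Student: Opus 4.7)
The plan is to split \eqref{eq:gap} into three parts: (a) the open gap $(\nu_-, \nu_+)$ lies in $\rho(T)$; (b) $\spec(T) \subset \R$; (c) the endpoints $\nu_\pm$ lie in $\spec(T)$. The refinement $\nu_\pm \in W(\ft) \Rightarrow \nu_\pm \in \pointspec(T)$ will follow at the end from a direct form-theoretic argument.

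For (a) I would invoke Proposition~\ref{prop:gap}~i): for $\la \in (\nu_-, \nu_+)$ one has $\ft(\la) \gg 0$ with the explicit lower bound $\gamma_\la = (\la - \nu_-)(\nu_+ - \la) > 0$. Since $T(\la)$ is, by Lemma~\ref{lemma:form}, the self-adjoint operator associated with the closed form $\ft(\la)$, uniform positivity of the form transfers to $T(\la) \ge \gamma_\la I$ on $\mD_T$, so $T(\la)$ is boundedly invertible and $\la \in \rho(T)$.

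Part (b) is where I expect the main obstacle. Writing $\la = \alpha + \im \beta$, the form identity $\re\,\ft(\la)[x] = \ft(\alpha)[x] + \beta^2 \|x\|^2$ yields coercivity, and hence $\la \in \rho(T)$, only when $\alpha \in [\nu_-, \nu_+]$; for non-real $\la$ with $\alpha$ outside the gap $\ft(\alpha)[x]$ can be arbitrarily negative, so the direct form approach fails. To bypass this I would pass to the bounded pencil $L$ of \eqref{eq:Ldef} via Proposition~\ref{T-L}: one has $\spec(T) \subset \spec(L)$ and $W(L) = W(\ft) = \Lambda_- \cup \Lambda_+ \subset \R$, the last equality because $\ft$ being strongly damped means each scalar equation $\ft(\la)[x]=0$ has only the two real roots $p_\pm(x)$. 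The two lemmas from Shkalikov~\cite{MR933487}, which rest on the Langer factorization of strongly damped quadratic operator polynomials, then give $\spec(L) \subset \overline{\Lambda_-} \,\dot\cup\, \overline{\Lambda_+} \subset (-\infty,\nu_-] \,\dot\cup\, [\nu_+,\infty)$. Combined with (a), this yields \eqref{eq:gap}.

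For (c), a short contradiction suffices: if $\nu_- \in \rho(T)$, then $T(\nu_-)$ is self-adjoint, bijective, and, by Proposition~\ref{prop:gap}~i), non-negative, so $0$ lies outside its closed spectrum and $T(\nu_-) \ge c I$ for some $c > 0$. This upgrades to $\ft(\nu_-) \gg 0$, whereupon Proposition~\ref{prop:gap}~ii) produces the contradiction $\nu_- < \nu_-$; the endpoint $\nu_+$ is handled symmetrically. Finally, if $\nu_- \in W(\ft)$ and $x_0 \neq 0$ satisfies $\ft(\nu_-)[x_0] = 0$, then since $\ft(\nu_-)$ is the closed form of the non-negative self-adjoint operator $T(\nu_-)$ we have $\|T(\nu_-)^{1/2} x_0\|^2 = \ft(\nu_-)[x_0] = 0$, so $x_0 \in \ker T(\nu_-)$ and $\nu_- \in \pointspec(T)$; the same reasoning covers $\nu_+$.
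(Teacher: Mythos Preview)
Your argument is correct. Parts (a) and (b) match the paper's proof (Proposition~\ref{prop:gap}~i) for the gap; Shkalikov \cite{MR933487} for reality), except that the paper applies \cite[Lemmas~1.1,~1.2]{MR933487} directly to $T$ rather than routing through $L$---a cleaner choice, since Shkalikov's setting is pencils with unbounded coefficients, and $L$ (bounded, with non-invertible leading term $-H_0^{-1}$) does not fit it as naturally. For $\nu_\pm \in \spec(T)$ and the final implication you take genuinely different routes: the paper cites \cite[Behauptung~5.1]{hhabil} to get $\nu_\pm \in \spec(L)$ and then uses $\spec(L)\cap\R = \spec(T)$, and for $\nu_\pm \in W(\ft) \Rightarrow \nu_\pm \in \pointspec(T)$ it passes to $L$ and applies Cauchy--Schwarz for the positive semi-definite form $(L(\nu_-)\,\cdot\,,\,\cdot\,)$ to conclude $L(\nu_-)x_-=0$. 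Your contradiction argument for (c) via Proposition~\ref{prop:gap}~i),~ii) and your use of the second representation theorem ($\ft(\nu_-)[x_0] = \|T(\nu_-)^{1/2}x_0\|^2 = 0$ forces $x_0 \in \ker T(\nu_-)^{1/2}=\ker T(\nu_-)$) are both self-contained and avoid the detour through $L$ and the external citations---a nice simplification.
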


\begin{proof}
By Proposition \ref{T-L} i), we have $W(T) \subset W(\ft) \subset (-\infty,\nu_-] \,\dot\cup \, [\nu_+,\infty) \subset \R$. Hence $T$ satisfies the assumptions of 
\cite[Lemmas~1.1, 1.2]{MR933487}, which yield that $\spec(T) \subset \R$.

By Proposition \ref{prop:gap} i), we have $\ft(\la) \gg 0$ for all $\la\in (\nu_-,\nu_+)$. Since $\mD(T(\la)) = \mD_T \subset \mD_\ft=\mD(H_0^{\frac 12})$, this implies that also $T(\la) \gg 0$ for all $\la\in (\nu_-,\nu_+)$. Since $T(\la)$ is selfadjoint, it follows that $0\in \rho(T(\la))$ and hence $\la\in \rho(T)$ for all $\la\in (\nu_-,\nu_+)$, which proves \eqref{eq:gap}.

By \eqref{eq:gap} and Proposition \ref{T-L} ii) and iii), it follows that $\pointspec(L)=\pointspec(T)$ and $\spec(L)\cap \R=\spec(T)$. 
By Proposition~\ref{T-L}~i), we have $W(\ft)=W(L)$. From \cite[Behauptung~5.1]{hhabil}, it follows that the boundary points $\nu_\pm$ of $W(\ft)=W(L)$ belong to $\spec(L)$ and hence $\nu_\pm \in \spec(L) \cap \R = \spec(T)$.

If $\nu_- \!\in W(\ft)=W(L)$, then there is an $x_-\!\in \CH \setminus \{0\}$ such that $\big(L(\nu_-)x_-,x_-\big)=0$. By \eqref{L-ft} and Proposition~\ref{prop:gap}~i), we have 
$(L(\nu_-)x,x) = \ft(\nu_-)[H_0^{-\frac 12}x,H_0^{-\frac 12}x] \ge 0$ for all $x\in \CH$. 
Now the Cauchy-Schwarz inequality for the positive semi-definite inner product $(L(\nu_-)\,\cdot\,,\,\cdot\,)$ shows that 
\[
  \big|(L(\nu_-)x_-,y)\big| \le \big|(L(\nu_-)x_-,x_-)\big|^{\frac 12} \,\big|(L(\nu_-)y,y)\big|^{\frac 12} = 0
\]
for arbitrary $y \in \CH$ and hence $L(\nu_-)x_- = 0$, i.e.\ $\nu_-\in \pointspec(L)$. By Proposition~\ref{T-L}~ii), we know that $\pointspec(L)=\pointspec(T)$ which completes the proof of the last claim.
%
\end{proof}


\section{Criteria for real spectrum}
\label{sec:condV}

In this section we establish conditions on the operator $V$ guaranteeing that the abstract Klein-Gordon pencil $T$ is strongly damped and $\nu_-<\nu_+$, thus ensuring that $T$ has real spectrum $\sigma(T) \subset (-\infty,\nu_-] \,\dot\cup \, [\nu_+,\infty)$.
They generalize the conditions given in \cite{Naj83} for bounded $V$ and they weaken the conditions given in \cite{LT06} in the case when $V$ is definite, i.e.\ $V \ge 0$ or $V\le 0$.   

\begin{lemma}
\label{berlin}
Assume that {\rm \ref{A1}} and  {\rm \ref{A3}} hold, and let $\rsa$, $\rsb \ge 0$, $\rsb<1$, be the constants according to \eqref{eq:H1hBound}
Then, for $\la\in\R$, the form $\ft(\la)$ is semi-bounded with
\begin{enumerate}
\item[{\rm i)}] $\ft(\la) \ge m^2 - \big( \rsa + \rsb m + |\la|\big)^2$; \vspace{1mm}
\item[{\rm ii)}] if $\,\rsa+\rsb m < m$, then $\ft(\la) \gg 0$ for $|\la|< m - (\rsa+ \rsb m)$; 
\item[{\rm iii)}] if $\,\|VH_0^{-\frac 12}\|<1$, then $ \ft(\la) \gg 0$ for $|\la|< m - \|VH_0^{-\frac 12}\|\,m$.
\end{enumerate}
\end{lemma}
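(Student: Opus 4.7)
The plan is to work directly from the formula $\ft(\lambda)[x] = \|H_0^{1/2}x\|^2 - \|(\lambda - V)x\|^2$, which holds for real $\lambda$ (since then $\bar\lambda = \lambda$ in the defining expression of Lemma~\ref{lemma:form}), and to estimate the second term via the triangle inequality combined with the linear form (V2$'$) of the hypothesis (cf.\ Remark~\ref{rem:kato}):
\[
  \|(\lambda - V)x\| \le |\lambda|\,\|x\| + \|Vx\| \le (a + |\lambda|)\,\|x\| + b\,\|H_0^{1/2}x\|.
\]
Using $\|x\| \le m^{-1}\|H_0^{1/2}x\|$ (which comes from $H_0 \ge m^2$) to absorb the $\|x\|$-term into a multiple of $\|H_0^{1/2}x\|$ yields
\[
  \|(\lambda - V)x\| \le \frac{a + bm + |\lambda|}{m}\,\|H_0^{1/2}x\|,
\]
whence $\ft(\lambda)[x] \ge m^{-2}\bigl(m^2 - (a + bm + |\lambda|)^2\bigr)\|H_0^{1/2}x\|^2$. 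A final application of $\|H_0^{1/2}x\|^2 \ge m^2\|x\|^2$, valid for producing a uniform lower bound precisely when the bracketed coefficient is non-negative, delivers the bound in~(i); semi-boundedness for arbitrary real $\lambda$ is already guaranteed by the sectorial character of $\ft(\lambda)$ established in Lemma~\ref{lemma:form}.

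Part~(ii) is a direct corollary of~(i): the hypothesis $a + bm < m$ together with $|\lambda| < m - (a + bm)$ is equivalent to $a + bm + |\lambda| < m$, which makes the bracket $m^2 - (a + bm + |\lambda|)^2$ strictly positive. Part~(iii) follows by noting that $\|VH_0^{-1/2}\| < 1$ immediately provides admissible constants for (V2$'$), namely $a = 0$ and $b = \|VH_0^{-1/2}\| < 1$, since
\[
  \|Vx\| = \|VH_0^{-1/2}\,H_0^{1/2}x\| \le \|VH_0^{-1/2}\|\,\|H_0^{1/2}x\|;
\]
applying (ii) with these constants yields precisely the stated range $|\lambda| < m - \|VH_0^{-1/2}\|\,m$.

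The main technical delicacy is the two-step handling of the two norms $\|x\|$ and $\|H_0^{1/2}x\|$: the bound $\|x\| \le m^{-1}\|H_0^{1/2}x\|$ is first used to consolidate the estimate entirely in terms of $\|H_0^{1/2}x\|$, and then reversed via $\|H_0^{1/2}x\|^2 \ge m^2\|x\|^2$ to obtain the $\|x\|^2$-type lower bound asserted in~(i). The direction of this second inequality is admissible precisely in the regime where the intermediate coefficient has the right sign, which is exactly the regime relevant to the applications in (ii) and~(iii).
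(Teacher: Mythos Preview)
Your argument for (ii) and (iii) is correct and in fact more direct than the paper's. For (i), the paper bounds $\|(V-\lambda)x\|^2$ by Young's inequality $(p+q)^2 \le (1+\epsilon^{-1})p^2 + (1+\epsilon)q^2$, applies $\|H_0^{1/2}x\|^2 \ge m^2\|x\|^2$ to the term with coefficient $1-(1+\epsilon)b^2$, and then optimises the resulting function $h(\epsilon)$, finding that $\epsilon_0 = (a+|\lambda|)/(bm)$ yields exactly $h(\epsilon_0)=m^2 - (a+bm+|\lambda|)^2$. Your two-step manoeuvre---absorb $\|x\|$ into $\|H_0^{1/2}x\|$ \emph{before} squaring, then convert back---reaches the same constant without any optimisation, and your derivation of (iii) from (ii) with $a=0$, $b=\|VH_0^{-1/2}\|$ is essentially the paper's own direct computation for (iii).

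The one gap, which you correctly flag, is that your last step in (i) is only valid when $m^2-(a+bm+|\lambda|)^2 \ge 0$; outside this range you obtain merely semi-boundedness from Lemma~\ref{lemma:form}, not the specific lower bound asserted in (i). The paper's $\epsilon$-method is designed to cover all real $\lambda$: the intermediate step $(1-(1+\epsilon)b^2)\|H_0^{1/2}x\|^2 \ge (1-(1+\epsilon)b^2)m^2\|x\|^2$ is legitimate whenever $(1+\epsilon)b^2 \le 1$, which at $\epsilon=\epsilon_0$ amounts to $a+|\lambda|\le m(1-b^2)/b$, a strictly weaker restriction than your $a+bm+|\lambda|\le m$. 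So your proof of (i) as stated is incomplete, though this lacuna affects neither (ii), (iii), nor anything used later in the paper.
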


\begin{proof}
i) Let $\la \in \R$ and $x\in\mD_\ft=\mD(H_0^{\frac 12})$. By \eqref{eq:H1hBound} and Remark \ref{rem:kato}, we have, for arbitrary $\eps>0$,
\begin{align*}
\ft(\la)[x]  &= \|H_0^{\frac 12} x\|^2 - \|(V-\la) x\|^2 \\
             & \ge \big(1-(1+\eps)\rsb^2\big) \|H_0^{\frac 12} x\|^2 - (1+\eps^{-1})(\rsa + |\la|)^2 \|x\|^2\\
             & \ge \Big( \big(1-(1+\eps)\rsb^2\big) m^2 - (1+\eps^{-1})(\rsa + |\la|)^2 \Big) \|x\|^2 =: h(\eps) \|x\|^2.
\end{align*}
It is not difficult to check that the function $h:(0,\infty)\to \R$ has a maximum at $\eps_0 = (a+|\la|)/bm$ and hence
\[
  \ft(\la) \ge h(\eps_0) =  m^2 - \big( \rsa + \rsb m + |\la|\big)^2.
\]

ii) is immediate from i).

iii) Let $\la \in \R$ and $x\in\mD_\ft=\mD(H_0^{\frac 12})$.
Since $H_0 \ge m^2$, we have $\|H_0^{-\frac 12}\| \le 1/m$, $\|H_0^{\frac 12}x\|\ge m\|x\|$, and hence the estimate
\begin{align}
\ft(\la)[x]  &= \|H_0^{\frac 12} x\|^2 - \big\|\big(VH_0^{-\frac 12} -\la H_0^{-\frac 12}\big) H_0^{\frac 12}x \big\|^2 \nonumber \\
             &\ge  \Big( 1 - \Big(  \big\|VH_0^{-\frac 12}\big\| + \frac{|\la|}{m} \Big)^2 \Big) \|H_0^{\frac 12}x\|^2 \label{hot} \\
             & \ge \Big( m^2 - \big( \|VH_0^{-\frac 12}\| m + |\la| \big)^2 \Big) \|x\|^2; \nonumber
\end{align}
here, for the last estimate, we have used that the first factor is $> 0$ if (and only if) $|\la|< m - \|VH_0^{-\frac 12}\|\,m$. 
\end{proof}

\begin{remark}
Claim iii) of Lemma \ref{berlin} is stronger than claim ii) since \eqref{eq:normS} shows that $\|VH_0^{-\frac 12}\|m \le a+bm$. However, the estimate \eqref{hot} in this case does not permit to derive a lower bound for $\ft(\la)$ as in i) for all $\la\in\R$.
\end{remark}

Part i) of the following theorem was proved in \cite[Lemma~5.1]{LNT06} by means of the operator polynomial $L$ (using the inclusions $W(T) \subset W(L)$, $\spec(T) \subset \spec(L)$, see Proposition \ref{T-L} i) and ii)); parts ii) and iii) were proved in \cite{Naj83} for bounded~$V$.

\begin{theorem}
\label{real-spec}
Assume that {\rm \ref{A1}} holds, i.e.\ $\mD(H_0^{\frac 12}) \subset \mD(V)$.
\begin{enumerate}
\item[{\rm i)}] 
\label{item:nosign}
If $\,\|VH_0^{-\frac 12}\|<1$, then
\[
  \nu_- \le - m +  \|V H_0^{-\frac 12}\| \,m < 0 < m  -  \|V H_0^{-\frac 12}\| \,m \le \nu_+
\]
and hence 
$$ 
   \spec(T) 
            \subset \big( - \infty,  - m  + \|V H_0^{-\frac 12}\| \, m \big] \,\dot\cup \, \big[ m - \|V H_0^{-\frac 12}\| \, m,\infty\big).
$$    
\item[{\rm ii)}] 
\label{item:signpos}
If $\,V$ is self-adjoint with $\,V \ge 0$ and $\|VH_0^{-\frac 12}\|<2$, then
\[
  \nu_- \le - m  + \|V H_0^{-\frac 12}\| \,m < m \le \nu_+
\]
and hence 
$$ 
   \spec(T) 
            \subset \big( - \infty,  - m + \|V H_0^{-\frac 12}\| \, m \big] \,\dot\cup \, \big[ m,\infty\big).
$$    
\item[{\rm iii)}] 
\label{item:signneg}
If $\,V$ is self-adjoint with $\,V \le 0$ and $\|VH_0^{-\frac 12}\|<2$, then
\[
  \nu_- \le - m  < m - \|V H_0^{-\frac 12}\| \,m \le \nu_+
\]
and hence 
$$ 
   \spec(T) 
            \subset \big( - \infty,  - m \big]  \,\dot\cup \, \big[ m - \|V H_0^{-\frac 12}\| \, m ,\infty\big).
$$    
\end{enumerate}
\end{theorem}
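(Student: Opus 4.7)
The common strategy in all three cases is to exhibit some $\la_0\in\R$ at which $\ft$ is suitably positive, read off bounds on $\nu_\pm$ via Propositions \ref{prop:gap0}(ii) and \ref{prop:gap}(ii), and then close with Theorem \ref{thm:specincl}. Concretely, $\ft(\la_0)\gg 0$ yields both strong damping and $\nu_-<\la_0<\nu_+$, while the weaker pointwise positivity $\ft(\la_0)[x]>0$ for every $x\ne 0$ is already enough for $\nu_-\le\la_0\le\nu_+$. Part (i) is then essentially Lemma \ref{berlin}(iii): under $\|VH_0^{-1/2}\|<1$, one gets $\ft(\la)\gg 0$ throughout the symmetric open interval of radius $m-\|VH_0^{-1/2}\|m$ around $0$; applying Proposition \ref{prop:gap}(ii) and letting $\la$ tend to each endpoint delivers the two $\nu_\pm$ inequalities, after which Theorem \ref{thm:specincl} produces the spectral inclusion.

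For parts (ii) and (iii) the sign information on $V$ enters through the refined identity
\[
\|(V-\la)x\|^2 \;=\; \|Vx\|^2 - 2\la(Vx,x) + \la^2\|x\|^2,
\]
whose cross term is nonpositive whenever $V$ and $\la$ have the same sign (so $V\ge 0,\,\la\ge 0$ in (ii), and $V\le 0,\,\la\le 0$ after the symmetrisation in (iii)). Combining this with the basic bounds $\|Vx\|\le\|VH_0^{-1/2}\|\,\|H_0^{1/2}x\|$ from assumption \ref{A1} and $m\|x\|\le\|H_0^{1/2}x\|$ from $H_0\ge m^2$, the plan for part (ii) is to establish: (a) $\ft(m)[x]\ge 0$ for every $x$, forcing $p_+(x)\ge m$ and hence $\nu_+\ge m$; (b) $\ft\bigl(-m+\|VH_0^{-1/2}\|m\bigr)[x]\ge 0$, giving $\nu_-\le -m+\|VH_0^{-1/2}\|m$; and (c) $\ft(\la_0)\gg 0$ at some interior $\la_0\in\bigl(-m+\|VH_0^{-1/2}\|m,\,m\bigr)$, certifying strong damping and the strict gap $\nu_-<\nu_+$ required by Theorem \ref{thm:specincl}. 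Part (iii) then follows from part (ii) via the substitution $V\mapsto -V$: the form polynomial associated with $-V$ at $\la$ equals $\ft$ at $-\la$, so the two root zones are interchanged and the bounds on $\nu_\pm$ in (iii) for $V$ are exactly those that (ii) yields for $-V$.

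The main obstacle is step (b) in the regime $\|VH_0^{-1/2}\|\in[1,2)$, where Lemma \ref{berlin}(iii) no longer applies and the crude estimate $\|Vx\|^2\le\|VH_0^{-1/2}\|^2\,\|H_0^{1/2}x\|^2$ is too wasteful to push $\ft(\la)[x]\ge 0$ all the way up to $\la=-m+\|VH_0^{-1/2}\|m$. The cross term $-2\la(Vx,x)$ must be exploited as a genuine improvement over the triangle inequality rather than simply discarded, and balanced against $-\|Vx\|^2$ by combining $V\ge 0$ with $H_0\ge m^2$.
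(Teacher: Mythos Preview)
Your treatment of part (i) is exactly the paper's: Lemma~\ref{berlin}(iii) gives $\ft(\la)\gg 0$ on the symmetric interval, Proposition~\ref{prop:gap}(ii) converts this into the bounds on $\nu_\pm$, and Theorem~\ref{thm:specincl} closes.

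For parts (ii) and (iii) your route diverges from the paper's, and the obstacle you flag is genuine for your approach. In the regime $s:=\|VH_0^{-1/2}\|\in[1,2)$ the left endpoint $\la=-m+sm$ is $\ge 0$, so the cross term does give $\|(V-\la)x\|^2\le\|Vx\|^2+\la^2\|x\|^2$; but combining this with $\|Vx\|\le s\,\|H_0^{1/2}x\|$ only yields $\ft(\la)[x]\ge(1-s^2)\|H_0^{1/2}x\|^2-\la^2\|x\|^2$, which is useless for $s\ge 1$. There is no lower bound on $(Vx,x)$ in terms of $\|Vx\|$ for unbounded $V\ge 0$, so the cross term cannot be ``balanced against $-\|Vx\|^2$'' in the way you propose.

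The paper bypasses your three separate endpoint checks by a different algebraic move: writing $(V-\la)^2=(V-\la+m)(V-\la-m)+m^2$ and using $H_0\ge m^2$ gives
\[
\ft(\la)[x]\;\ge\;-\big((V+m-\la)\,x,\,(V-m-\la)\,x\big),
\]
and then, for every $\la$ in the open interval $(-m+sm,\,m)$, one argues that $V+m-\la>0$ (from $V\ge 0$, $\la<m$) while $V-m-\la\le 0$ (from $\la>-m+sm$), so the right-hand side is $\ge 0$ and in fact $\ft(\la)[x]>0$. Proposition~\ref{prop:gap0}(ii) applied for all such $\la$ then yields $\nu_-\le -m+sm<m\le\nu_+$, and part (iii) follows by symmetry. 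This product factorisation is the idea missing from your plan.

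One caveat worth noting: the paper's sign claim $((V-m-\la)y,y)\le 0$ for all $y\in\mD(V)$ amounts to the operator bound $V\le m+\la$, which the hypothesis $s<2$ alone does not force when $V$ is unbounded; the argument as written is unproblematic for bounded $V$ (the setting of the original result in \cite{Naj83}). So the factorisation is the right structural idea, but the subsequent sign analysis deserves scrutiny in the unbounded case.
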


\begin{proof}
i) 
%
The estimates for $\nu_\pm$ are immediate from Lemma \ref{berlin} iii) and Proposition~\ref{prop:gap}~ii).
Together with Theorem~\ref{thm:specincl}, the inclusion for $\spec(T)$ follows. 

ii) The condition $\|V H_0^{-\frac 12}\| <2$ implies that $- m  + \|V H_0^{-\frac 12}\| \,m <  m$. Then, for arbitrary 
$\la \in \big(\!- m  + \|V H_0^{-\frac 12}\| \,m,  m \big)$ and $x\in \mD_\ft = \mD(H_0^{\frac 12} )$, $\|x\|=1$, we have
\begin{align*}
 \ft(\la)[x] & 
 = \big( H_0^{\frac 12} x, H_0^{\frac 12} x \big)
 - \big( (V-\la+m)\,x, (V-\la-m)\,x \big) - m^2 \|x\|^2\\
             & \ge - \big( (V+(m-\la))\,x, (V-m-\la) \,x \big)
\end{align*}
because $H_0 \ge m^2$. 
Since $\la < m$ and $V \ge 0$, we have $V+m-\la > 0$. 
Moreover, for arbitrary $y \in \mD(V)$, we can estimate
\begin{align*}
  \big( (V-m-\la)\,y,y \big) & \le |(Vy,y)| - (m + \la) (y,y) \\
                               & \le \Big( \|V H_0^{-\frac 12}\| \, \|H_0^{\frac 12} y\| - (m + \la) \| H_0^{-\frac 12}\| \,\| H_0^{\frac 12} y\| \Big) \|y\| \\
                               & \le \Big( \|V H_0^{-\frac 12}\| - 1 -  \frac{\la}m \Big)  \|H_0^{\frac 12} y\| \, \|y\|
                               \, \le \, 0
\end{align*}
because $\la > - m  + \|V H_0^{-\frac 12}\| \,m$. Thus, as $V$ is self-adjoint, the square roots of the non-negative operators $V+m-\la$, $- (V-m-\la)$ exist and we arrive at the~estimate
\[
  \ft(\la)[x] \ge \!\big( \big( V+ m-\la \big)^{\frac 12} \big(-(V-m-\la) \big)^{\frac 12} x, \big( V+m-\la \big)^{\frac 12} \big(-(V-m-\la) \big)^{\frac 12} x \big) 
  > 0.
\]
Now Proposition \ref{prop:gap0} ii) applied twice with $\la_0 = - m  + \|V H_0^{-\frac 12}\| \,m$ and $\la_0 = m$ implies that $\nu_- \le - m  + \|V H_0^{-\frac 12}\| \,m < m \le \nu_+$. 

\smallskip

iii) The proof of iii) is completely analogous to the proof of ii).
\end{proof}

\begin{corollary}
\label{feb28}
Suppose that assumptions {\rm \ref{A1}} and  {\rm \ref{A3}} hold, let $\ra$, $\rb \ge 0$, $\rb <1$ and $\rsa$, $\rsb \ge 0$, $\rsb<1$, 
be the constants according to \eqref{eq:H1hBoundSquare} and \eqref{eq:H1hBound}, respectively. Define 
\[
  \delta_1 := \rsa+\rsb \,m, \quad \delta_2:=\sqrt\ra + \sqrt \rb \,m, \quad \delta_3 := \sqrt{\ra+\rb \,m^2},
\]
and let $i\in\{1,2,3\}$. 
\begin{enumerate}
\item[{\rm i)}] If $\delta_i< m$, then
\[
 \spec(T) \subset \big( - \infty, - m + \delta_i \big] \,\dot\cup \, \big[ m - \delta_i, \infty \big).
\]
\item[{\rm ii)}]
If $\,V$ is self-adjoint with $\,V \ge 0$ and $\delta_i< 2m$, then
\[
 \spec(T) \subset \big( - \infty, - m + \delta_i \big] \,\dot\cup \, \big[ m , \infty \big).
\] 
\item[{\rm iii)}]
If $\,V$ is self-adjoint with $\,V \le 0$ and $\delta_i< 2m$, then
\[
 \spec(T) \subset \big( - \infty, - m \big] \,\dot\cup \, \big[ m - \delta_i, \infty \big).
\]
\end{enumerate}
\end{corollary}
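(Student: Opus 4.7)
The plan is to reduce the corollary to Theorem~\ref{real-spec} by replacing the operator norm $\|VH_0^{-\frac 12}\|$ in its hypotheses and conclusions with the constants $\delta_i$, using Proposition~\ref{prop:norms}.

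First I would record the key observation: Proposition~\ref{prop:norms} gives the three bounds
\[
  \|VH_0^{-\frac 12}\| \le \frac{\rsa}{m}+\rsb, \qquad
  \|VH_0^{-\frac 12}\| \le \frac{\sqrt\ra}{m}+\sqrt\rb, \qquad
  \|VH_0^{-\frac 12}\| \le \sqrt{\frac{\ra}{m^2}+\rb},
\]
and multiplying each by $m$ yields precisely
\[
  \|VH_0^{-\frac 12}\|\,m \le \delta_1, \quad \|VH_0^{-\frac 12}\|\,m \le \delta_2, \quad \|VH_0^{-\frac 12}\|\,m \le \delta_3,
\]
so that $\|VH_0^{-\frac 12}\|\,m \le \delta_i$ for each $i\in\{1,2,3\}$.

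For part i), the hypothesis $\delta_i<m$ forces $\|VH_0^{-\frac 12}\|<1$, so Theorem~\ref{real-spec}~i) applies and gives
\[
  \spec(T) \subset \bigl(-\infty,-m+\|VH_0^{-\frac 12}\|\,m\bigr] \,\dot\cup\, \bigl[m-\|VH_0^{-\frac 12}\|\,m,\infty\bigr).
\]
Since $\|VH_0^{-\frac 12}\|\,m\le \delta_i$, the intervals on the right enlarge to $(-\infty,-m+\delta_i]$ and $[m-\delta_i,\infty)$, as claimed. For parts ii) and iii), the hypothesis $\delta_i<2m$ instead yields $\|VH_0^{-\frac 12}\|<2$, so we may invoke Theorem~\ref{real-spec}~ii) or~iii) (depending on the sign of $V$) and again replace $\|VH_0^{-\frac 12}\|\,m$ by the upper bound $\delta_i$ in the spectral inclusion.

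There is no real obstacle here; the only thing to verify carefully is that the monotonicity argument (enlarging an interval by increasing the endpoint from $\|VH_0^{-\frac 12}\|\,m$ to $\delta_i$) is legitimate, which it is since both intervals in Theorem~\ref{real-spec} have endpoints that move outward when $\|VH_0^{-\frac 12}\|\,m$ is replaced by a larger quantity. Thus the corollary follows directly from Proposition~\ref{prop:norms} and Theorem~\ref{real-spec}.
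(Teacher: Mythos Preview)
Your proposal is correct and follows exactly the same approach as the paper: use the estimates in Proposition~\ref{prop:norms} to bound $\|VH_0^{-\frac12}\|m$ by $\delta_i$, and then feed this into Theorem~\ref{real-spec}. The paper's own proof is a single sentence to this effect; your version merely spells out the monotonicity step more carefully.
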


\begin{proof}
The claims are immediate from  Proposition \ref{real-spec} since $\|VH_0^{-\frac 12}\| \le \delta_i m$ due to the estimates in \eqref{eq:normS}.
\end{proof}

\begin{remark} 
Note that $\delta_3 \le \delta_2$ so that the assumption $\delta_3<m$ is weaker than $\delta_2<m$. Hence if $\delta_2<m$, then we have the spectral inclusions for $i=2$ and $i=3$ and thus
\[
 \spec(T) \subset \big( - \infty, - m + \delta_3 \big] \,\dot\cup \, \big[ m - \delta_3, \infty \big) \subset
 \big( - \infty, - m + \delta_2 \big] \,\dot\cup \, \big[ m - \delta_2, \infty \big).
\]
\end{remark}

\begin{remark}
\label{rem:shift}
If $\widetilde V$ is a symmetric operator in $\CH$ of the form $\widetilde V = V + c$ where $V$ satisfies 
{\rm \ref{A1}} and  {\rm \ref{A3}}, then all the above results as well as Theorem \ref{thm:main2} below also apply to $\widetilde V$; in this case, we only have to replace the spectral parameter $\la$ by the new spectral parameter $\widetilde\la=\la-c$ and $\nu_\pm$ by $\widetilde \nu_\pm := \nu_\pm -c$.
\end{remark}

\section{Simplicity of the ground states}
\label{sec:positivity}

In this section we consider the particular case of the Hilbert space $\CH=\Ltwo(\R^n)$, the self-adjoint operator $H_0=-\Delta+m^2$, and a (real-valued) multiplication operator $V$ therein. We assume that $V$ satisfies the assumptions of the previous sections so that the corresponding Klein-Gordon pencil 
$T(\la) = H_0 \dot- V^2 +2\la V - \la^2$, $\la \in \C$, is strongly damped and $\nu_-<\nu_+$. 

We shall show that if, in this case, $\nu_-$ or $\nu_+$ are eigenvalues of the Klein-Gordon pencil $T$, then they are not only semi-simple, but even simple with strictly positive eigenfunction.

To this end, we need the concept of positivity preserving and improving linear operators, and a few other definitions.

\begin{definition}
Let $(M,\, \rd\mu)$ be a measure space.
\begin{enumerate}
\item[{\rm i)}] A function $f\in\Ltwo(M,\, \rd\mu)$ is called \emph{strictly positive} if $f(x)> 0$ for almost all $x\in M$ and \emph{positive} if $f(x)\ge 0$ for almost all $x\in M$ and $f\not\equiv 0$.
\item[{\rm ii)}] A linear operator $B$ in $\Ltwo(M, \rd\mu)$ is called \emph{positivity preserving} if $Bf$ is positive for all positive $f\in\mD(B)$ and  \emph{positivity improving} if $Bf$ is strictly positive for all positive $f\in\Ltwo(M,\,\rd\mu)$.
\item[{\rm iii)}] A \emph{configuration projection} is a projection that is a multiplication operator in $\Ltwo(M, \rd\mu)$;
the range of such a configuration projection is called a \emph{configuration subspace} of $\Ltwo(M, \rd\mu)$.
\item[{iv)}] A bounded linear operator $B$ in $\Ltwo(M, \rd\mu)$ is called \emph{indecomposable} if no non-trivial configuration subspace is left invariant under $B$.
\end{enumerate}
\end{definition}

 
The simplicity of the ground states of Schr\"odinger operators is usually proved by means of the following Krein-Rutman type theorem for bounded self-adjoint operators in $\Ltwo(M, \rd\mu)$, applied to the resolvent or to the corresponding semi-group. The following two theorems were proved in \cite{faris}.

\begin{theorem}[\protect{\cite[Thm.\ 10.3]{faris}}] 
\label{thm:PF}
Let $B$ be a bounded self-adjoint operator in $\Ltwo(M,\, \rd\mu)$ which is positivity preserving
and indecomposable. If $b_+:=\max \spec(B)$ is an eigenvalue of $B$,
then $b_+$ is simple with strictly positive eigenfunction.
\end{theorem}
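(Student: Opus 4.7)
The plan is the classical Perron--Frobenius/Krein--Rutman argument in Hilbert space: exploit the variational characterisation $b_+=\sup_{\|g\|=1}(Bg,g)$ for the bounded self-adjoint operator $B$ together with the pointwise inequality $|Bg|\le B|g|$ coming from positivity preservation, and then invoke indecomposability to rule out vanishing of the extremal eigenfunction.

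First I would reduce to real-valued functions. Since $B$ is positivity preserving, decomposing a real $g=g_+-g_-$ shows that $B$ maps real-valued functions to real-valued functions; hence the eigenspace $E:=\ker(B-b_+\id)$ is closed under complex conjugation and is spanned by its real elements. The key preparatory fact is that $|f|\in E$ whenever $f\in E$ is real: one has $|Bf|=|Bf_+-Bf_-|\le Bf_++Bf_-=B|f|$ pointwise, and combined with $B=B^*$ and $\||f|\|=\|f\|$ this gives
\begin{align*}
b_+\|f\|^2\,=\,(Bf,f)\,\le\,(|Bf|,|f|)\,\le\,(B|f|,|f|)\,\le\,b_+\||f|\|^2\,=\,b_+\|f\|^2.
\end{align*}
All inequalities are therefore equalities, so $|f|$ maximises the Rayleigh quotient of $B$; by the spectral theorem any such maximiser is an eigenfunction for $\max\spec(B)=b_+$, so $|f|\in E$.

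Next I would show that every non-negative eigenfunction $h\in E$ is almost everywhere strictly positive. Set $A:=\{h>0\}$; for any non-negative $g\in\Ltwo(M,\rd\mu)$ supported in $A^c$, self-adjointness yields $0\le(Bg,h)=(g,Bh)=b_+(g,h)=0$, and since $Bg\ge 0$ and $h\ge 0$, this forces $Bg$ to vanish on $A$, i.e.\ to be supported in $A^c$. Decomposing into real/imaginary and positive/negative parts, the configuration subspace consisting of functions supported in $A^c$ is therefore $B$-invariant; indecomposability together with $h\ne 0$ then forces $A^c$ to be a null set, so $h>0$ almost everywhere. A direct corollary is that every real $f\in E$ is of constant sign: otherwise both $f_+=(|f|+f)/2$ and $f_-=(|f|-f)/2$ would be non-negative, non-zero elements of $E$ with disjoint supports, contradicting the strict positivity just established.

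Finally, for simplicity, suppose $f_1,f_2\in E$ are real and linearly independent; by the preceding steps they may be chosen strictly positive. For every $\lambda\in\R$ the eigenfunction $f_1-\lambda f_2\in E$ must be of constant sign, but whenever $\lambda$ lies strictly between $\essinf(f_1/f_2)$ and $\esssup(f_1/f_2)$, $f_1-\lambda f_2$ is strictly positive on a set of positive measure and strictly negative on another, a contradiction. Hence $f_1/f_2$ is almost everywhere constant, contradicting independence; so $\dim E=1$, and $b_+$ is simple with strictly positive eigenfunction. The one delicate point I anticipate is the rigidity step that passes from the chain of equalities $(B|f|,|f|)=b_+\||f|\|^2$ to the conclusion $|f|\in E$; this relies on the spectral-theorem fact that maximisers of the Rayleigh quotient of a bounded self-adjoint operator are eigenvectors at the top of the spectrum, and is the main technical input. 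The remainder is positivity bookkeeping and a direct application of the indecomposability hypothesis.
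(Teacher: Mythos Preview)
The paper does not give its own proof of this theorem; it is simply quoted from Faris's monograph \cite[Thm.\ 10.3]{faris} and used as a black box. Your argument is correct and is precisely the classical Perron--Frobenius/Krein--Rutman proof one finds in Faris or in Reed--Simon~IV: the chain $(Bf,f)\le(|Bf|,|f|)\le(B|f|,|f|)\le b_+\||f|\|^2$ forces $|f|$ to be a Rayleigh-quotient maximiser, hence an eigenvector; indecomposability then upgrades non-negativity to strict positivity and yields simplicity. The step you flag as delicate---that a maximiser of the Rayleigh quotient at $b_+=\max\spec(B)$ must satisfy $Bg=b_+g$---is indeed the only place where the spectral theorem is genuinely invoked, and your justification is the standard one. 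One cosmetic remark: in the invariance step you tacitly use that $Bg\ge 0$ for non-negative $g$; strictly speaking the paper's definition of ``positivity preserving'' requires $g\not\equiv 0$, but the case $g\equiv 0$ is of course trivial.
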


\begin{theorem}[\protect{\cite[Thm.\ 10.5]{faris}}] 
   \label{thm:indecomposable}
   Assume that $T_0$ is a non-negative self\-ad\-joint operator in $\Ltwo(M,\, \rd\mu)$
   and let $v$ be a real-valued measurable function on $M$
   such that the corresponding multiplication operator $V$ is $T_0$-form-bounded
   with relative bound $<1$. Denote by $T$ the operator form sum  $T_0+V$, which is 
   self-adjoint and bounded from below. Then
   
   \begin{enumerate}

      \item[{\rm i)}]
      if $\,T_0$ is indecomposable, then $T$ is indecomposable; \vspace{1mm}

      \item[{\rm ii)}]
      if $\,T_0$ is indecomposable and $(T_0 + c)^{-1}$ is positivity preserving
      for all $c>0$ and $t_-:=\min \spec(T)$ is an eigenvalue of $T$, then $t_-$ is simple with
      strictly positive eigenfunction.

   \end{enumerate}

\end{theorem}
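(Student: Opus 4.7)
The plan is to treat (i) by a direct form-level commutation argument and to reduce (ii) to an application of Theorem~\ref{thm:PF} to a bounded function of $T$.

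For (i), let $P$ be a configuration projection such that the range of $P$ is left invariant by $T$; equivalently, if $\fa_T$ denotes the closed quadratic form of $T$, this means $\fa_T[Px,y]=\fa_T[x,Py]$ for all $x,y\in\mD(\fa_T)=\mD(T_0^{\frac12})$. Being itself a multiplication operator, $P$ commutes with $V$, so the $V$-contribution to $\fa_T$ satisfies the same intertwining; subtracting it leaves the form of $T_0$, which consequently also intertwines with $P$. The indecomposability of $T_0$ then forces $P$ to be trivial.

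For (ii), set $B:=\e^{-tT}$ for some fixed $t>0$. By the spectral theorem, $\max\spec(B)=\e^{-tt_-}$ is an eigenvalue of $B$ whose eigenspace coincides with $\ker(T-t_-)$, so the simplicity of $t_-$ together with the strict positivity of its eigenfunction will follow as soon as the hypotheses of Theorem~\ref{thm:PF} are verified for $B$. Boundedness and self-adjointness of $B$ are clear, and the indecomposability of $B$ follows from (i) via the functional calculus, since a configuration projection commutes with $B$ precisely when it commutes with the resolvent of $T$.

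It remains to show that $B$ is positivity preserving. The Post--Widder exponential formula
\[
   \e^{-tT_0}f = \lim_{n\to\infty}\Bigl(\tfrac{n}{t}\bigl(T_0+\tfrac{n}{t}\bigr)^{-1}\Bigr)^{\!n}\! f,
\]
combined with the hypothesis that each $(T_0+c)^{-1}$ is positivity preserving, shows that $\e^{-tT_0}$ is positivity preserving. Since $V$ is multiplication by $v$, the operator $\e^{-tV}$ is multiplication by the strictly positive function $\e^{-tv}$ and is therefore also positivity preserving. The Trotter product formula for form sums,
\[
   \e^{-tT} = \operatorname*{s-lim}_{n\to\infty}\bigl(\e^{-tT_0/n}\,\e^{-tV/n}\bigr)^{n},
\]
then exhibits $B$ as a strong limit of products of positivity-preserving operators, hence positivity preserving. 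The principal technical obstacle is the justification of this last formula for a $T_0$-\emph{form}-bounded (rather than operator-bounded) $V$ with relative bound $<1$; this requires Kato's form version of the Trotter--Kato theorem, after which Theorem~\ref{thm:PF} applied to $B$ delivers both conclusions.
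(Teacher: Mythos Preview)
The paper does not give its own proof of this statement; it is quoted from \cite[Thm.\ 10.5]{faris} and used as a black box. Your sketch follows the standard route (essentially Faris's, and closely related to \cite[Sect.\ XIII.12]{RSIV}): Trotter-type approximation to propagate positivity preservation from $T_0$ to $T$, followed by an appeal to Theorem~\ref{thm:PF}.

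Two remarks on the details. In~(i), the claimed equivalence between ``range of $P$ invariant under $T$'' and ``$\fa_T[Px,y]=\fa_T[x,Py]$ for all $x,y\in\mD(T_0^{1/2})$'' presupposes that $P$ preserves the form domain $\mD(T_0^{1/2})$, which is not automatic for a general configuration projection (think of $T_0=-\Delta$ and multiplication by a characteristic function, which destroys $\hs^1$-regularity). One first has to read ``indecomposable'' for the unbounded operator $T$ as ``no non-trivial configuration projection commutes with the resolvent of $T$''; then $P$ commutes with $(T+c)^{-1/2}$ by the functional calculus, so $P$ maps $\mD\big((T+c)^{1/2}\big)=\mD(T_0^{1/2})$ into itself, and your subtraction argument goes through. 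In~(ii), the Trotter formula as written is problematic when $v$ is unbounded below, since then $\e^{-tV/n}$ is an unbounded multiplication operator and the products $\big(\e^{-tT_0/n}\e^{-tV/n}\big)^n$ are not obviously bounded; the usual remedy (in Faris and in Reed--Simon) is to approximate $v$ by the bounded cutoffs $v_k=\max(v,-k)$, apply Trotter for each $k$, and pass to the limit via monotone convergence of forms and semigroups. You correctly flag this as the principal technical point.
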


In the sequel we apply these results to the Klein-Gordon equation with vanishing vector potential $A$ (i.e.\ $A_j\equiv 0$, $j=1,\dots,n$); here the following lemma will be~used. 

\begin{lemma}\label{lemma:pospres}
   The operator $H_0=-\Delta+m^2$ in $\Ltwo(\R^n)$ is indecomposable and $(H_0 + c)^{-1}$ is positivity preserving for all $c>0$.
\end{lemma}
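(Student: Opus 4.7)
The plan is to deduce both assertions from the explicit heat kernel of $H_0$. By the functional calculus (or by standard PDE), the semigroup generated by $H_0=-\Delta+m^2$ acts on $f\in\Ltwo(\R^n)$ for $t>0$ by
\begin{equation*}
\bigl(\e^{-tH_0}f\bigr)(x)\;=\;\frac{\e^{-tm^2}}{(4\pi t)^{n/2}}\int_{\R^n}\e^{-|x-y|^2/(4t)}\,f(y)\,\rd y.
\end{equation*}
The kernel is strictly positive on $\R^n\times\R^n$, so $\e^{-tH_0}$ is positivity improving for every $t>0$, in particular positivity preserving.

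For the resolvent I would use the Laplace transform representation, valid for $c>0$ because $H_0\ge m^2>0$,
\begin{equation*}
(H_0+c)^{-1}f\;=\;\int_0^\infty \e^{-ct}\,\e^{-tH_0}f\,\rd t,
\end{equation*}
which shows that $(H_0+c)^{-1}$ is an integral superposition of positivity improving operators with the strictly positive weight $\e^{-ct}$. Hence $(H_0+c)^{-1}$ is itself positivity improving, and in particular positivity preserving, which gives the second claim.

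For indecomposability, I would interpret the notion for the unbounded self-adjoint $H_0$ through its resolvent: $H_0$ is indecomposable iff $(H_0+c)^{-1}$ is indecomposable for some (equivalently, every) $c>0$. So let $P=M_{\chi_E}$ be a configuration projection with $\mathcal M:=P\,\Ltwo(\R^n)$ invariant under $(H_0+c)^{-1}$. Pick any positive $f\in\mathcal M$, so $f\ge 0$ and $f\not\equiv 0$. Then $(H_0+c)^{-1}f\in\mathcal M$ vanishes almost everywhere on $E^{\mathrm c}$; on the other hand, by the previous step, $(H_0+c)^{-1}f$ is strictly positive on all of $\R^n$. Therefore $|E^{\mathrm c}|=0$, i.e.\ $\mathcal M=\Ltwo(\R^n)$, so no non-trivial configuration subspace is invariant.

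I expect the only real obstacle to be the bookkeeping around the definition of indecomposability for the unbounded operator $H_0$: the paper's definition is stated for bounded operators, so one has to note explicitly that invariance of a closed subspace $\mathcal M$ under the self-adjoint $H_0$ is equivalent to invariance under its bounded resolvent $(H_0+c)^{-1}$, after which the pointwise strict positivity of the Gaussian (equivalently, of the resolvent kernel) closes the argument cleanly. Everything else is direct computation with the heat kernel.
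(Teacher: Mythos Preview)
Your proof is correct and follows the same overall skeleton as the paper's (semigroup first, then resolvent via the Laplace transform $(H_0+c)^{-1}=\int_0^\infty \e^{-ct}\e^{-tH_0}\,\rd t$), but the two halves are handled more directly than in the paper. The paper invokes the Trotter product formula to see that $\e^{-tH_0}$ is positivity preserving and then cites \cite{faris} for indecomposability of $H_0$; you instead write down the explicit Gaussian heat kernel, which immediately gives the stronger statement that $\e^{-tH_0}$ (and hence $(H_0+c)^{-1}$) is positivity \emph{improving}, and then extract indecomposability from that. This is a genuine simplification: since $-\Delta$ and $m^2 I$ commute, Trotter is unnecessary here, and the strict positivity of the Gaussian kernel makes the indecomposability argument a two-line consequence rather than a citation. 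Your caveat about the bookkeeping is well placed: the paper's Definition of indecomposability is stated for bounded $B$, while Theorem~\ref{thm:indecomposable} (taken from \cite{faris}) uses it for the unbounded $T_0$; your reading via invariance under the resolvent is the standard one and is exactly what is needed for that theorem, so the argument closes.
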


\begin{proof}
   In \cite{faris} it was shown that $H_0$ is indecomposable. By the Trotter product formula (see \cite[Thm.\ VIII.31]{RSI}), we have
   \begin{align*}
      \e^{-t(-\Delta + m^2)}\, =\,
      {\rm s-\!}\lim_{n\to\infty} \bigl( \e^{t\Delta/n} \e^{-tm^2/n)} \bigr)^n.
   \end{align*}
   Since $\e^{t\Delta}$ is positivity pre\-serving (see \cite[XIII.12, Ex.\ 1]{RSIV})
   and this property is preserved under strong limits
   (see \cite{RSII}), it follows that $\e^{-tH_0}$ is positivity preserving. 
   Now the well-known formula (see~\cite[(X.98)]{RSII})
   \begin{align*}
      (H_0+c)^{-1}x\, =\, 
      \int_0^\infty \e^{-ct} \e^{-tH_0} x\,\rd t,
      \quad x\in\Ltwo(\R^n),
      &
   \end{align*}
shows that $(H_0+c)^{-1}$ is positivity preserving as well. 
\end{proof}

\begin{theorem}\label{thm:main2}
   Let $H_0=-\Delta+m^2$ and let $V$ be a symmetric operator in $\Ltwo(\R^n)$ with $\mD(H_0^{\frac 12}) \subset \mD(V)$ and $\|VH_0^{-\frac 12}\|<1$. Suppose that 
   $\nu_- < \nu_+$ for $\nu_\pm$ defined in \eqref{nu+}, \eqref{nu-}.
   If $\,\nu_+$ $(\nu_-$, respectively$)$ is an eigenvalue of $\,T$, then it is simple with strictly positive eigenfunction.
\end{theorem}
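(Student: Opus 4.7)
The plan is to identify $T(\nu_+)$ with a form sum $H_0 \dot + W$ where $W$ is a real-valued multiplication operator, and then to apply the Krein--Rutman type Theorem~\ref{thm:indecomposable}, using the indecomposability and positivity-preserving properties of $H_0$ from Lemma~\ref{lemma:pospres}, to show that the bottom of $\sigma(T(\nu_+))$ is a simple eigenvalue with strictly positive eigenfunction.

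First I would observe that since $V$ is a real multiplication operator, completing the square yields
\[
   T(\nu_+) \,=\, H_0 \,\dot -\, V^2 + 2\nu_+ V - \nu_+^2 \,=\, H_0 \,\dot -\, (V-\nu_+)^2,
\]
so $W := -(V-\nu_+)^2$ is a non-positive multiplication operator. By Proposition~\ref{prop:norms}, the hypothesis $\|VH_0^{-\frac 12}\|<1$ is equivalent to \ref{A3} holding with constants $a,b\ge 0$, $b<1$. An elementary estimate of $\|(V-\nu_+)x\|^2$, splitting the cross-term with a small parameter $\varepsilon>0$ and invoking \ref{A3}, then shows that $W$ is $H_0$-form-bounded with $H_0$-form-bound $\le (1+\varepsilon)b<1$ for $\varepsilon$ chosen small enough. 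Moreover, by Proposition~\ref{prop:gap} i), $\ft(\nu_+)\ge 0$, hence $T(\nu_+)\ge 0$; combined with the assumption that $\nu_+$ is an eigenvalue of $T$, that is, $0\in\pointspec(T(\nu_+))$, this gives $\min\spec(T(\nu_+))=0$ and this minimum is attained as an eigenvalue.

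At this point Theorem~\ref{thm:indecomposable} applies to $T_0:=H_0$ and the form-bounded perturbation $W$: Lemma~\ref{lemma:pospres} provides the required indecomposability of $H_0$ and positivity preservation of $(H_0+c)^{-1}$ for all $c>0$. Part~i) of Theorem~\ref{thm:indecomposable} yields that $T(\nu_+)$ is indecomposable, and part~ii), applied with $t_-=0$, shows that $0$ is a simple eigenvalue of the self-adjoint operator $T(\nu_+)$ with a strictly positive eigenfunction. In particular $\dim\ker T(\nu_+)=1$.

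To upgrade this to simplicity of $\nu_+$ as an eigenvalue of the operator polynomial $T$, I would use that $\nu_-<\nu_+$ forces $\ft$ to be strongly damped (see the definition of $\nu_\pm$ in Lemma~\ref{strongly-damped-form}), hence $T$ is strongly damped by Remark~\ref{rem:T-ft} ii), so Theorem~\ref{thm:semi-simple} ensures that every eigenvalue of $T$ is semi-simple. Together with $\dim\ker T(\nu_+)=1$ this gives simplicity of $\nu_+$, with the strictly positive eigenfunction produced above. The case of $\nu_-$ is proved verbatim in the same way, again using $T(\nu_-)\ge 0$ from Proposition~\ref{prop:gap} i). The one point that needs care, and which I expect to be the main technical obstacle, is the verification that the form bound of $-(V-\nu_+)^2$ stays strictly below $1$: a crude bound $\|(V-\nu_+)x\|^2 \le 2\|Vx\|^2+2\nu_+^2\|x\|^2$ inflates $b$ to $2b$ and may fail; one has to insert the tuning parameter $\varepsilon$ to preserve $b<1$ so that Theorem~\ref{thm:indecomposable} is applicable.
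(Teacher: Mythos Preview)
Your argument is correct and follows essentially the same route as the paper: identify $T(\nu_\pm)=H_0\,\dot-\,(V-\nu_\pm)^2$, use Proposition~\ref{prop:gap}~i) to get $T(\nu_\pm)\ge 0$ with $0$ an eigenvalue, apply Theorem~\ref{thm:indecomposable} together with Lemma~\ref{lemma:pospres} to conclude that $0$ is simple with strictly positive eigenfunction, and then invoke Theorem~\ref{thm:semi-simple} for semi-simplicity. The ``main technical obstacle'' you anticipate---keeping the $H_0$-form bound of $-(V-\nu_\pm)^2$ strictly below $1$ via an $\varepsilon$-splitting---is precisely the estimate carried out in the proof of Lemma~\ref{lemma:form}, so no new work is needed there.
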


\begin{proof}
   Assume that $\nu_-$ is an eigenvalue of $T$; the proof for $\nu_+$ is analogous.
   Then $0$ is an eigenvalue of $T(\nu_-)$ and $T(\nu_-)\ge 0$ by Proposition \ref{prop:gap} i).
   By Theorem~\ref{thm:indecomposable}, $0$ is a simple eigenvalue of $T(\nu_-)$ and there is a strictly positive eigenfunction of $T(\nu_-)$ 
   at $0$ which is simultaneously an eigenfunction of $T$ at $\nu_-$.
   Since $\nu_-$ is a semi-simple eigenvalue of $T$ by Theorem~\ref{thm:semi-simple}, the assertion is proved.
\end{proof}

\begin{remark}
For non-vanishing magnetic vector potential $A$, the extremal
eigenvalues cannot be
expected to be simple in general. This phenomenon already manifests
itself
in the situation of two spatial dimensions and a constant magnetic field
$B$ (in which
case we can choose $A=x\wedge B$). Then 
\[
   H_0 =\sum_{j=1}^2 \Big(-\im \frac{\partial}{\partial x_j}-eA_j \Big)^2+m^2
\]
is the so-called Landau Hamiltonian (shifted by the constant $m^2$),
for which the spectral resolution is explicitly known and whose eigenvalues are
infinitely degenerate (see \cite{Landau30}).
Another related result is the Aharonov-Casher theorem
(see \cite[Thm.\ 6.5]{CFKS87}) which asserts that the degeneracies
of the eigenvalues of the Landau Hamiltonian
with a bounded compactly supported magnetic field are proportional to
the total flux of the magnetic field.
It is also known (see \cite{SondWils} and \cite{SimFuncInt})
that, for $A\neq 0$, the operator $H_0$ does not generate a
positivity-preserving semigroup and
thus does not satisfy the assumptions of the Krein-Rutman
theory we have employed in this section.
\end{remark}

\section{Examples of potentials}
\label{sec:examples}

In this final section we apply the above simplicity results for the ground states to the spectral problem associated with the Klein-Gordon equation \eqref{eq:KGtime} in $\R^n$ with $n\ge 3$. We show, in particular, how the assumptions of our abstract theorems work out for certain classes of scalar potentials, including Coulomb-like and Rollnik potentials.

The first class of scalar potentials was motivated by a remark in \cite{Naj83}, although the potentials therein were, in general, assumed to be bounded. In the sequel, we denote by $\hs^1(\R^n)$ the first order Sobolev space associated with $\Ltwo(\R^n)$. 

\begin{theorem}
\label{prop:najman-example}
Let $A\equiv 0$ and suppose that the potential~$q$ satisfies
\[
  q_- := \esssup_{x\in\R^n} \left\{ eq(x) - \sqrt{m^2 + \frac{\gamma^2}{\|x\|^2}} \right\} < 
         \essinf_{x\in\R^n} \left\{ eq(x) + \sqrt{m^2 + \frac{\gamma^2}{\|x\|^2}} \right\} =: q_+
\]
with $n \ge 3$ and $0 \le \gamma < (n-2)/2$. Then the form polynomial $\ft\!$ given by
\[
  \ft(\la) [\Psi] = \big( \!-\im \nabla \,\Psi,  -\im \nabla \,\Psi \big)  + m^2 \| \Psi\|^2 - \big( (eq - \la) \Psi, (eq - \overline{\la}) \Psi \big), 
  \quad \Psi\!\in\!\hs^1(\R^n), 
\]
is strongly damped and the boundary points $\nu_\pm$ of its numerical range defined in \eqref{nu-}, \eqref{nu+} satisfy 
\[
 \nu_- \le q_- < q_+ \le \nu_+.
\]
Hence the operator polynomial $T$ associated with the Klein-Gordon equation \eqref{eq:KGtime} is strongly damped, its spectrum $\sigma(T)$ is real,
\[
  \sigma(T) \subset (-\infty,q_-] \,\dot\cup\, [q_+,\infty),
\]
$\nu_\pm \!\in\! \sigma(T)$, 
all eigenvalues of $\,T$ are semi-simple, 
and if $\,\nu_-$ or $\nu_+$ are eigenvalues of~$\,T$, they are simple with strictly positive eigenfunctions.
\end{theorem}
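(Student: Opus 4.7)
The plan is to verify the hypotheses of the abstract machinery of Sections~\ref{sec:T}--\ref{sec:positivity} for the multiplication operator $V$ by $eq$ in $\Ltwo(\R^n)$; the core input is the classical Hardy inequality combined with a pointwise bound on $|eq(x)-\la|$ that is implicit in the definitions of $q_\pm$. Hardy's inequality gives, for $n\ge 3$ and $\Psi\in\hs^1(\R^n)$,
$$
   \int_{\R^n} \frac{|\Psi(x)|^2}{\|x\|^2}\,dx \,\le\, \frac{4}{(n-2)^2}\,\|\nabla\Psi\|^2,
$$
and the hypothesis $\gamma<(n-2)/2$ is exactly equivalent to $4\gamma^2/(n-2)^2<1$. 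Next, directly from the definitions of $q_\pm$, for almost every $x\in\R^n$ and every $\la\in[q_-,q_+]$ one has the pointwise bound
$$
   |eq(x)-\la|\,\le\,\sqrt{m^2+\gamma^2/\|x\|^2},
$$
because $eq(x)-q_-\le\sqrt{m^2+\gamma^2/\|x\|^2}$ and $q_+-eq(x)\le\sqrt{m^2+\gamma^2/\|x\|^2}$. Squaring, integrating against $|\Psi|^2$, and invoking Hardy yields
$$
   \ft(\la)[\Psi] \,\ge\, \Bigl(1-\tfrac{4\gamma^2}{(n-2)^2}\Bigr)\,\|\nabla\Psi\|^2 \,\ge\, 0, \quad \la\in[q_-,q_+].
$$

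For each nonzero $\Psi\in\hs^1(\R^n)$, the map $\la\mapsto\ft(\la)[\Psi]$ is a downward-opening quadratic in $\la$ (its $\la^2$-coefficient equals $-\|\Psi\|^2<0$) which is non-negative on the non-degenerate interval $[q_-,q_+]$; this forces its two roots $p_\pm(\Psi)$ to be real and distinct with $p_-(\Psi)\le q_-<q_+\le p_+(\Psi)$. Hence $\ft$ is strongly damped and $\nu_-\le q_-<q_+\le\nu_+$. The verification of assumption~\ref{A1} is immediate from the pointwise bound; for~\ref{A3} with $\rb<1$, I would fix $\la_0\in[q_-,q_+]$ and use the elementary inequality $|eq(x)|^2\le (1+\eps)\la_0^2+(1+\eps^{-1})(m^2+\gamma^2/\|x\|^2)$ together with Hardy, so that the coefficient of $\|\nabla\Psi\|^2$ in the bound on $\|V\Psi\|^2$ becomes $(1+\eps^{-1})\cdot 4\gamma^2/(n-2)^2$, which can be made strictly less than~$1$ by choosing $\eps$ sufficiently large. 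Theorems~\ref{thm:semi-simple} and~\ref{thm:specincl} then deliver at once the semi-simplicity of every eigenvalue of $T$, the spectral inclusion $\spec(T)\subset(-\infty,q_-]\,\dot\cup\,[q_+,\infty)$, and the membership $\nu_\pm\in\spec(T)$.

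Finally, for the simplicity of the ground states, suppose that $\nu_-$ is an eigenvalue of $T$; then $0$ is an eigenvalue of $T(\nu_-)$, and Proposition~\ref{prop:gap}~i) gives $T(\nu_-)\ge 0$, so $0=\min\spec(T(\nu_-))$. By Corollary~\ref{cor:op-pol}, $T(\nu_-)=H_0\,\dot-\,(V-\nu_-)^2$ as a form sum; applying the pointwise bound at $\la=\nu_-\in[q_-,q_+]$ together with Hardy shows that the real multiplication operator $-(V-\nu_-)^2$ is $H_0$-form-bounded with relative bound at most $4\gamma^2/(n-2)^2<1$. Lemma~\ref{lemma:pospres} combined with Theorem~\ref{thm:indecomposable} (Faris) then yields that $0$ is a simple eigenvalue of $T(\nu_-)$ with strictly positive eigenfunction, and Theorem~\ref{thm:semi-simple} upgrades this to the simplicity of $\nu_-$ as an eigenvalue of $T$ with strictly positive eigenfunction; the argument for $\nu_+$ is analogous. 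The chief technical obstacle is producing a form bound strictly less than~$1$ on $(V-\nu_\pm)^2$, which is precisely what the hypothesis $\gamma<(n-2)/2$ supplies through Hardy's inequality.
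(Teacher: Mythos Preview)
Your argument follows essentially the same route as the paper's: the pointwise bound $|eq(x)-\la|\le\sqrt{m^2+\gamma^2/\|x\|^2}$ combined with Hardy's inequality, then an appeal to the abstract machinery of Sections~\ref{sec:T}--\ref{sec:positivity}. Your variant---showing $\ft(\la)\ge 0$ on the closed interval $[q_-,q_+]$ and then invoking the downward-parabola argument directly---is a clean alternative to the paper's $\varepsilon$-slack version ($\ft(\la)\gg 0$ on the open interval followed by Proposition~\ref{prop:gap}~ii)); both give $\nu_-\le q_-<q_+\le\nu_+$.

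There is, however, one genuine slip in the last paragraph. You write ``applying the pointwise bound at $\la=\nu_-\in[q_-,q_+]$'', but you have only shown $\nu_-\le q_-$, not $\nu_-\ge q_-$; in general $\nu_-$ may lie strictly to the left of $q_-$, and then the pointwise estimate $|eq(x)-\nu_-|\le\sqrt{m^2+\gamma^2/\|x\|^2}$ need not hold. The fix is immediate: you have already established \ref{A3} with $\rb<1$, and the proof of Lemma~\ref{lemma:form} shows that for \emph{every} $\la\in\R$ the form $-(V-\la)^2$ is $H_0$-form-bounded with bound $<1$ (just use $(V-\nu_-)^2\le(1+\eps^{-1})(V-\la_0)^2+(1+\eps)(\la_0-\nu_-)^2$ with any fixed $\la_0\in[q_-,q_+]$, or equivalently absorb the bounded shift $2\nu_-V-\nu_-^2$ into the $\|\Psi\|^2$-term). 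With this correction, Theorem~\ref{thm:indecomposable} applies to $T(\nu_\pm)$ exactly as you intend, and the proof is complete.
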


\begin{proof}
The claims follow from Theorems \ref{thm:semi-simple} and \ref{thm:main2} if we show that 
$V=eq$ satisfies conditions \ref{A1} and \ref{A3} and that $\ft(\la) \gg 0$ for $\la \in (q_-,q_+)$. So let $\la \in (q_-,q_+)$. 
Then, by the definition of $q_\pm$ and by assumption, there exists an $\varepsilon > 0$ such that 
\begin{equation}
\label{eq:eq}
   \big| eq(x) - \la \big|  \le \sqrt{m^2 + \frac{\gamma^2}{\|x\|^2}} - \varepsilon \quad \text{for almost all\ } \ x \in \R^n.
\end{equation}
Together with Hardy's inequality (see e.g.\ \cite[Sect.\ 3.3]{MR1313735}), we thus obtain, for $\Psi\in\mD(H_0^{\frac{1}{2}}) = \hs^1(\R^n)$, 
\begin{align}
	 \big\| ( V -\la ) \Psi \big\|^2 
	 & = \int_{\R^n} \bigl| \big( eq(x) - \la \big) \Psi(x) \bigr|^2\, \rd x \nonumber \\
	 & \le 
	 \int_{\R^n} \left(m^2 + \frac{\gamma^2}{\|x\|^2} - \varepsilon^2 \right) \left| \Psi(x) \right|^2\, \rd x \nonumber 
	 \\
	 & = 
	 \big( m^2 - \varepsilon^2 \big) \|\Psi\|^2
	 +
	 \gamma^2 \, \int_{\R^n}  \frac{1}{\|x\|^2 } | \Psi(x) |^2\, \rd x \nonumber 
	 \\
	 &\le
	 \big( m^2 - \varepsilon^2 \big) \|\Psi\|^2
	 + \frac{4 \gamma^2}{(n-2)^2} \int_{\R^n} | \nabla\Psi(x) |^2\, \rd x \nonumber 
	 \\
	 &=
	 \big(  m^2 - \varepsilon^2 \big) \|\Psi\|^2 +  \frac{4 \gamma^2}{(n-2)^2}\| \nabla \Psi \|^2 \label{eq:hardy-last}  \\
	 & \le \Big( \Big( 1 - \frac{4 \gamma^2}{(n-2)^2} \Big) m^2 - \varepsilon^2 \Big) \|\Psi\|^2 
	 + \frac{4 \gamma^2}{(n-2)^2} \big\| H_0^{\frac{1}{2}} \Psi \big\|^2. \nonumber
\end{align}
This shows that $\mD(H_0^{\frac{1}{2}}) = \hs^1(\R^n) \subset \mD(V)$, i.e.\ condition \ref{A1} holds and, since $\gamma\!<\!1/2$ and $n\ge 3$, 
condition \ref{A3} holds with some $\alpha \ge 0$ and $\beta = 4 \gamma^2/{(n-2)^2}\!<1$.

The inequality \eqref{eq:hardy-last} yields that
\[
  \big\| ( V -\la ) \Psi \big\|^2 \le \big( m^2 - \varepsilon^2 \big) \|\Psi\|^2 + \| \nabla \Psi \|^2
                                  = - \varepsilon^2 \|\Psi\|^2 + \big\| H_0^{\frac{1}{2}} \Psi \big\|^2
\]
for $\la \in (q_-,q_+)$ and $\Psi\in\mD(H_0^{\frac{1}{2}}) = \hs^1(\R^n)$
and hence
\[
 \ft(\la)[\Psi] = \| H_0^{\frac{1}{2}} \Psi \|^2 - \big\| ( V -\la ) \Psi \big\|^2 \ge \varepsilon^2 \|\Psi\|^2,
\]
i.e.\ $\ft(\la) \gg 0$ for $\la \in (q_-,q_+)$, as required. 
\end{proof}

\begin{remark}
The assumptions on $q$ in \cite{Naj83} are slightly different; it is only required that $q$ satisfies a pointwise estimate as in Theorem \ref{prop:najman-example} with $\gamma=1/2$. It is not clear why, under this weaker assumption, still $\|VH_0^{-\frac 12}\|<1$ as claimed in~\cite{Naj83}.
\end{remark}

A particular case of Theorem \ref{prop:najman-example} is the Coulomb potential in $\R^n$; in the case $n=3$, the explicitly known formulas for the eigenvalues 
and eigenfunctions (see \cite{Ves83} and also \cite[Sect.\,V]{LT06}) confirm our results.

\begin{corollary}[Coulomb potential in $\R^n$]
\label{cor:coulomb}
Let $A\equiv 0$ and let $q(x) = -Ze /\|x\|$, $x\in \R^n\setminus\{0\}$, where $Z$ is the nuclear charge.    
If $Ze^2< (n-2)/2$, then the Klein-Gordon pencil $T$ is strongly damped~with 
\[
  \nu_- \le -m < 0 \le \nu_+,
\]
the spectrum of $T$ is real, 
\[
  \pointspec(T) \subset \spec(T) \subset (-\infty,-m] \,\dot\cup\, [0,\infty), 
\]
$\nu_\pm \in \spec(T)$, and all eigenvalues of $T$ are semi-simple. If $\nu_+ < m$, then $\nu_+ \in \pointspec(T)$ and $\nu_+$ is simple with 
strictly positive eigenfunction.
\end{corollary}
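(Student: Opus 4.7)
The plan is to realise the Coulomb potential as a special case of Theorem \ref{prop:najman-example} via the choice $\gamma := Ze^2$, read off the spectral information by computing $q_- = -m$ and $q_+ = 0$, and then upgrade $\nu_+ \in \spec(T)$ to $\nu_+ \in \pointspec(T)$ under the extra hypothesis $\nu_+ < m$ by locating the essential spectrum of $T$.

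With $\gamma := Ze^2$, the assumption $Ze^2 < (n-2)/2$ is precisely the hypothesis $\gamma < (n-2)/2$ of Theorem \ref{prop:najman-example}. Writing $r := \|x\|$, I would study the two radial functions
\[
   \varphi_\pm(r) \;:=\; -\frac{Ze^2}{r} \pm \sqrt{m^2 + \frac{Z^2 e^4}{r^2}} \;=\; \frac{\pm\sqrt{m^2 r^2 + Z^2 e^4} - Ze^2}{r}.
\]
A routine differentiation shows that $\varphi_-$ is strictly increasing from $-\infty$ (as $r \to 0^+$) to $-m$ (as $r \to \infty$), giving $q_- = \esssup\varphi_- = -m$, and that $\varphi_+$ is positive and strictly increasing from $0$ (by Taylor-expanding the square root at $r \to 0^+$) to $m$, giving $q_+ = \essinf\varphi_+ = 0$. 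Since $q_- < q_+$, Theorem \ref{prop:najman-example} applies and yields directly the strong damping of $T$, the bounds $\nu_- \le -m < 0 \le \nu_+$, the inclusion $\spec(T) \subset (-\infty,-m]\,\dot\cup\,[0,\infty)$, the memberships $\nu_\pm \in \spec(T)$, and the semi-simplicity of every eigenvalue of $T$. The Hardy inequality used in that proof also gives $\|VH_0^{-\frac 12}\| \le 2Ze^2/(n-2) < 1$, so the hypothesis of Theorem \ref{thm:main2} is in force.

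To obtain the final claim, it remains to show that $\nu_+ < m$ forces $\nu_+ \in \pointspec(T)$; simplicity and the strictly positive eigenfunction then follow at once from Theorem \ref{thm:main2}. I would deduce this by identifying $\essspec(T) = (-\infty,-m]\,\dot\cup\,[m,\infty)$, so that any $\la \in \spec(T) \cap [0,m)$ lies outside $\essspec(T)$ and hence, using self-adjointness of $T(\la)$ for real $\la$, automatically in $\pointspec(T)$. Via Proposition \ref{T-L} iii) the task reduces to computing $\essspec(L)\cap\R$, and from \eqref{eq:Ldef}
\[
  L(\lambda) \;=\; \bigl(I - \lambda^2 H_0^{-1}\bigr) \;+\; \bigl[\lambda\bigl(S^* H_0^{-\frac 12} + H_0^{-\frac 12} S\bigr) - S^* S\bigr],
\]
whose bracketed perturbation is compact as soon as $S := VH_0^{-\frac 12}$ is compact. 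Then $\essspec(L) = \essspec(I - \lambda^2 H_0^{-1})$, which using $\essspec(H_0) = [m^2,\infty)$ equals $(-\infty,-m]\,\dot\cup\,[m,\infty)$.

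The main obstacle is therefore the compactness of $S = VH_0^{-\frac 12}$ for the Coulomb weight. I would establish it by the standard cut-off $V = V\chi_{\{r \le R\}} + V\chi_{\{r > R\}}$: the tail satisfies $\|V\chi_{\{r > R\}} H_0^{-\frac 12}\| \le (Ze^2/R)\,\|H_0^{-\frac 12}\| \to 0$ as $R \to \infty$, and $V\chi_{\{r \le R\}} H_0^{-\frac 12}$ is compact by composing the bounded map $H_0^{-\frac 12}\colon \Ltwo(\R^n) \to \hs^1(\R^n)$ with Rellich--Kondrachov compactness $\hs^1(B_R) \hookrightarrow L^p(B_R)$ for $p < 2n/(n-2)$ and a H\"older estimate using $V\chi_{\{r \le R\}} \in L^q(B_R)$ for $q < n$ (with $1/p + 1/q = 1/2$). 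With this compactness in hand, the essential-spectrum identification and the conclusion $\nu_+ \in \pointspec(T)$ follow, and Theorem \ref{thm:main2} closes the proof.
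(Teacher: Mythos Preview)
Your reduction to Theorem \ref{prop:najman-example} with $\gamma = Ze^2$ and the computation of $q_\pm$ are correct and match the paper's approach; the paper merely verifies the inequalities $q_-\le -m$, $q_+\ge 0$ rather than computing the exact values, but your sharper statement is fine.

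The genuine gap is in the essential-spectrum step: for the Coulomb potential the operator $S=VH_0^{-\frac 12}$ is \emph{not} compact, so your decomposition of $L(\lambda)$ does not yield $\essspec(L(\la))=\essspec(I-\la^2 H_0^{-1})$. In fact your own H\"older/Rellich--Kondrachov bookkeeping is inconsistent: you need $1/p+1/q=1/2$ with $p<2n/(n-2)$ and $q<n$, but these force $1/p+1/q>(n-2)/(2n)+1/n=1/2$, a contradiction. More conceptually, the Hardy inequality is sharp but never saturated, so $\|S\|=2Ze^2/(n-2)$ is not attained; a compact operator always attains its norm on an eigenvector of $S^*S$, hence $S$ cannot be compact. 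Consequently the term $S^*S$ in your bracketed perturbation is non-compact (even though $H_0^{-\frac 12}S$ and $S^*H_0^{-\frac 12}$ are compact, since $|V|^{\frac 12}\in L^q(B_R)$ for $n<q<2n$).

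The paper avoids this by using the weaker and correct fact that $V$ is $H_0$-compact, i.e.\ $VH_0^{-1}$ is compact (now the exponent budget works: one gains an extra derivative, so the Rellich step uses $H^2(B_R)\hookrightarrow L^p(B_R)$), and then invokes \cite{MR0462350} for the invariance of $\essspec(T)$ under this type of perturbation of the Klein--Gordon pencil. To repair your argument you should either switch to $H_0$-compactness of $V$ and work with $T(\la)$ directly, or argue that $0\notin\essspec\big(I-S^*S-\la^2H_0^{-1}\big)$ for $|\la|<m$ by some route that does not require $S^*S$ to be compact.
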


\begin{proof}
By Theorem \ref{prop:najman-example} (or by directly applying Hardy's inequality), we see that 
$V=eq$ is $(-\Delta+m^2)^{\frac12}$-bounded,
\begin{align*}
   \mD(H_0^{\frac{1}{2}})\subset \mD(V),
   \quad
   \| V H_0^{-\frac{1}{2}} \| \le  \frac{2Ze^2}{n-2}.
\end{align*}
Hence all claims but the last follow from Theorem \ref{prop:najman-example} if we show that $q_-\le -m$ and $q_+\ge 0$. 
These two estimates follow from the inequalities
\[
  \frac{-Ze^2}{\|x\|} - \sqrt{m^2 + \frac{(Ze^2)^2}{\|x\|^2}} \le -m, \quad
  \frac{-Ze^2}{\|x\|} + \sqrt{m^2 + \frac{(Ze^2)^2}{\|x\|^2}} \ge 0, \quad x\in \R^n\setminus\{0\}.
\]
Notice that the bounds $\nu_-\le -m$, $\nu_+\ge 0$ also follow from Theorem \ref{real-spec} iii) since the Coulomb potential is negative.  

In order to prove the last claim, we observe that $V=eq$ is $H_0$-compact where $H_0=-\Delta+m^2$ (see \cite[Lemma V.5.8]{kato}).
This implies that the difference of the resolvents of $T(\la)$ and of $H_0-\la^2$ is compact for every $\la\in\R$ and 
\[
  \essspec(T) = \big\{ \lambda \in\C : \lambda^2\in\essspec(H_0) \big\} = 
                \big(\!-\!\infty,-m\big]\,\dot\cup\,\big[m,\infty\big)
\]
(comp.\ \cite{MR0462350}). Since $\nu_\pm \!\in\! \spec(T)$ by Theorem \ref{thm:specincl}, 
we conclude that, if $0\le \nu_+\!<\!m$, then $\nu_+ \notin \essspec(T)$ or, equivalently $0 \notin \essspec(T(\nu_+))$. 
Since $T(\nu_+)$ is selfadjoint, it follows that $0\in \spec(T(\nu_+)) \setminus \essspec(T(\nu_+)) \subset \pointspec(T(\nu_+))$. 
Thus $\nu_+\in \pointspec(T)$ is simple with strictly positive eigenfunction by Theorem \ref{thm:main2}. 
\end{proof}

\begin{example}[Coulomb potential in $\R^3$]
The above results agree with the explicit formulas for the eigenvalues and eigenfunctions of the Klein-Gordon problem in $\R^3$. 
In fact, if $|Ze^2|< 1/2$, the eigenvalues are given \vspace{-1mm} by 
\begin{align*}
   & \la_{k,l} = m \left( 1+ \frac{(Ze^2)^2}{\left( k\!-\!l\!-\!\frac 12 \!+\! \sqrt{(l\!+\!\frac 12)^2 
   \!-\! (Ze^2)^2}\right)^2} \right)^{\!\!\!\!-\frac 12}\!\!\!\!\!, \ \ k=1,2,\dots,\ l=0,1,\dots, k\!-\!1.
\end{align*}
All eigenvalues $\la_{k,l}$ lie in the interval $(0,m)$, they are semi-simple, and
$\la_{k,l}$ has (geometric and algebraic) multiplicity $2l+1$. A corresponding set $\big\{ \Psi_{k,l,j} : j=0, \pm 1, \pm 2, \dots \pm l \big\}$ of linearly independent eigenfunctions is given by 
\begin{align*}
   \Psi_{k,l,j}(x) =
   N_{k,l}\, Y_{l,j}(\theta,\phi) \,
   \beta_{k,l}^{\mu_{l}  - \frac{1}{2}}\,
   \|x\|^{\mu_{l}  - \frac{1}{2}}\,
   \e^{-\beta_{k,l}\|x\|/2}
   {}_1F_1(l+1-k, 2\mu_{l} +1; \beta_{k,l}\|x\|)
\end{align*}
for $x \in \R^3$ in spherical coordinates $(\|x\|, \theta,\phi)$ and $k=1,2,\dots$, $l=0,1,2,\dots$, $j=0, \pm 1, \pm 2, \dots \pm l$; here $N_{k,l}>0$ is a normalisation constant,
$Y_{l,j}$ are the spherical harmonics, $\beta_{k,l} := 2\sqrt{m^2 - \lambda_{k,l}^2}$, 
$\mu_{l} := \sqrt{ (l+\frac{1}{2})^2 - (Ze^2)^2}$, and ${}_1F_1$ are the confluent hypergeometric functions.

In particular, the smallest eigenvalue $\la_{1,0}$ in the gap $(-m,m)$ of the essential spectrum is simple; $\la_{1,0}$ and the corresponding eigenfunction $\Psi_{1,0,0}$ are given by (see \cite[Exercise 1.11]{greiner}, \cite{Ves83} or \vspace{1mm} \cite[Sect.~5,~Ex.~1]{LT06}) 
\begin{align*}
   \la_{1,0} &= m \left( 1+ \frac{(Ze^2)^2}{\left(\frac 12 + \sqrt{\frac 14 - (Ze^2)^2}\right)^2} \right)^{\!\!\!\!-\frac 12}\!\!\!\!\!, \quad
   \\
   \Psi_{1,0,0}(x)
   &=
   N_{1,0} \Big(2 \,\|x\| \sqrt{m^2 - \lambda_{1,0}^2} \Big)^{\sqrt{1/4- (Ze^2)^2} - \frac{1}{2}}\,
   \e^{-\|x\| \sqrt{m^2 - \lambda_{1,0}^2}}
%
   , \quad x \in \R^3.
\end{align*}
Obviously, $\Psi_{1,0,0}$ is strictly positive (note that we chose $N_{1,0}\!>\!0$).
Since $\la_{1,0} \!<\! m$ and $\la_{1,0} \!=\! p_+(\Psi_{1,0,0})$, we know that $\nu_+ \le \la_{1,0} < m$ and hence, by Corollary \ref{cor:coulomb}, $\nu_+ \in \pointspec(T)$.
This illustrates the claim of Theorem \ref{thm:main2} that $\nu_+=\la_{1,0}$ is simple with strictly positive eigenfunction $\Psi_{1,0,0}$.
\end{example}

\begin{proposition}[Rollnik potentials]
Let $n=3$ and $A\equiv 0$. The scalar potential  $V=eq$  is said to be in the Rollnik class ${\mathcal R}$ if $q:\R^3\to\C$ is a measurable function~and
\begin{align*}
   \| V\|_{\mathcal R}^2 := \int_{\R^3}  \int_{\R^3}  \frac{ | V(x) V(y) | }{ |x-y|^2 }\, \rd x\, \rd y < \infty
\end{align*}
$($see \cite{RSII}$)$. If $V^2 \in {\mathcal R}$ and $\|V^2\|_{\mathcal R} < 4\pi$, then the Klein-Gordon pencil $T$ is strongly damped with $\nu_-< \nu_+$,
its spectrum $\sigma(T)$ is real with
$$ 
  \spec(T) \subset 
  \Big( \!-\! \infty,  - m  +  \sqrt{ \frac{\|V^2\|_{\mathcal R}}{4\pi}}\, m \Big] \,\dot\cup \, \Big[ m - \sqrt{ \frac{\|V^2\|_{\mathcal R}}{4\pi}} \, m,\infty\Big),
$$    
all eigenvalues of $\,T$ are semi-simple, 
$\nu_\pm \!\in\! \sigma(T)$, and if $\nu_-$ or $\nu_+$ are eigenvalues of~$\,T$, they are simple with strictly positive eigenfunctions.
\end{proposition}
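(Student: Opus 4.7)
The plan is to reduce the statement to the hypothesis $\|VH_0^{-\frac 12}\|<1$ of Theorem \ref{real-spec} i) and Theorem \ref{thm:main2} by establishing the sharper Rollnik bound $\|VH_0^{-\frac 12}\|^2\le\|V^2\|_{\mathcal R}/(4\pi)$, and then to assemble the conclusions from the abstract machinery of Sections \ref{sec:reality}--\ref{sec:positivity}.

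The main step is the estimate itself. Since $V$ is a real-valued multiplication operator, $V^2=|V|^2$ and standard adjoint calculus gives
$$\|VH_0^{-\tfrac 12}\|^2 \,=\, \big\|H_0^{-\tfrac 12}V^2 H_0^{-\tfrac 12}\big\| \,=\, \big\||V|H_0^{-1}|V|\big\|.$$
The operator inequality $H_0=-\Delta+m^2\ge-\Delta>0$ yields $H_0^{-1}\le(-\Delta)^{-1}$ as positive quadratic forms, so $\||V|H_0^{-1}|V|\|\le\||V|(-\Delta)^{-1}|V|\|$. In $\R^3$ the inverse Laplacian has the Newton kernel $(4\pi|x-y|)^{-1}$, so $|V|(-\Delta)^{-1}|V|$ is the integral operator with kernel $\frac{|V(x)||V(y)|}{4\pi|x-y|}$. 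The Hilbert--Schmidt norm computes directly to
$$\big\||V|(-\Delta)^{-1}|V|\big\| \,\le\, \big\||V|(-\Delta)^{-1}|V|\big\|_{\mathrm{HS}} \,=\, \frac{\|V^2\|_{\mathcal R}}{4\pi},$$
and the hypothesis $\|V^2\|_{\mathcal R}<4\pi$ yields $\|VH_0^{-\frac 12}\|\le\sqrt{\|V^2\|_{\mathcal R}/(4\pi)}<1$. In particular, condition \ref{A1} holds and, by Proposition \ref{prop:norms}, so does \ref{A3}.

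With $\|VH_0^{-\frac 12}\|<1$ in force, Theorem \ref{real-spec} i) delivers the strong damping of $\ft$ (hence of $T$, by Remark \ref{rem:T-ft}) together with
$$\nu_-\le -m+\|VH_0^{-\tfrac 12}\|\,m \,<\, m-\|VH_0^{-\tfrac 12}\|\,m\le\nu_+,$$
and the claimed spectral inclusion follows by monotonicity from $\|VH_0^{-\frac 12}\|\le\sqrt{\|V^2\|_{\mathcal R}/(4\pi)}$. Theorem \ref{thm:specincl} then gives $\sigma(T)\subset\R$ together with $\nu_\pm\in\sigma(T)$; Theorem \ref{thm:semi-simple} yields semi-simplicity of every eigenvalue; and when $\nu_-$ or $\nu_+$ is an eigenvalue, Theorem \ref{thm:main2} furnishes simplicity and a strictly positive eigenfunction, since its hypotheses ($H_0=-\Delta+m^2$, $V$ a real symmetric multiplication with $\|VH_0^{-\frac 12}\|<1$, and $\nu_-<\nu_+$) are precisely what we have established.

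The main obstacle is the Rollnik estimate in the first step---in particular the careful justification that $|V|(-\Delta)^{-1}|V|$ is well-defined as a positive bounded (indeed Hilbert--Schmidt) operator on $\Ltwo(\R^3)$ under the sole assumption $V^2\in\mathcal R$, and that the quadratic-form comparison $H_0^{-1}\le(-\Delta)^{-1}$ transfers cleanly to the dominating inequality $\||V|H_0^{-1}|V|\|\le\||V|(-\Delta)^{-1}|V|\|$. Once this technical core is in place, the rest is a direct assembly of the abstract theorems already at our disposal.
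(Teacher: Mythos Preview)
Your proposal is correct and follows essentially the same route as the paper: establish the key bound $\|VH_0^{-\frac12}\|\le\sqrt{\|V^2\|_{\mathcal R}/(4\pi)}<1$ and then invoke Theorems \ref{real-spec}~i), \ref{thm:specincl}, \ref{thm:semi-simple}, and \ref{thm:main2}. The only difference is cosmetic: the paper obtains the Rollnik estimate by a direct citation of \cite[Thm.~X.19]{RSII}, whereas you unpack that reference via the Birman--Schwinger identity $\|VH_0^{-\frac12}\|^2=\||V|H_0^{-1}|V|\|$, the form comparison $H_0^{-1}\le(-\Delta)^{-1}$, and the Hilbert--Schmidt computation for the Newton kernel --- which is precisely the content of the cited theorem.
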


\begin{proof}
It was shown in \cite[Thm.\ X.19]{RSII} that if $V^2 \in {\mathcal R}$, then $V$ is $H_0^{\frac 12}$-bounded with $H_0^{\frac 12}$-bound 0 and
\[
  \big\| V (-\Delta+m^2)^{-\frac 12} \big\| \le \sqrt{ \frac{\|V^2\|_{\mathcal R}}{4\pi}}.
\]
Hence condition \ref{A1} holds and, if $\|V^2\|_{\mathcal R} < 4\pi$, then condition \ref{A3} is satisfied with $a=0$ and $b=\sqrt{ \|V^2\|_{\mathcal R}/(4\pi)}<1$.
Thus Theorems \ref{real-spec} i), \ref{thm:semi-simple}, and \ref{thm:main2} apply and yield all claims.
\end{proof}

While the simplicity results of Section \ref{sec:positivity} only apply for vanishing magnetic potentials $A \equiv 0$, the semi-simplicity results of Section \ref{sec:reality} also apply if $A \not\equiv 0$. Here, we only mention that the condition \ref{A1}, i.e.\ $\mD(H_0^{\frac 12}) \subset \mD(V)$, may be guaranteed by the following assumptions on $A$ and \vspace{2mm} $V$:

\begin{condlist}
\CondItem{(KG1)}\label{KG1} 
  $A = (A_j)_{j=1}^n\!\in\!\Ltwoloc(\R^n)^n$;   \vspace{-0.3mm} 
\CondItem{(KG2)}\label{KG2} 
  $H_0$ is the self-adjoint realisation of $\displaystyle\sum_{j=1}^n \Bigl(-\im\,\pdiff{x_j} - e A_j \Bigr)^2 + m^2$ in $\Ltwo(\R^n)$;  \vspace{-1mm} 
\CondItem{(KG3)}\label{KG3}   
  the multiplication operator $V$ by $eq$ in $\Ltwo(\R^n)$ satisfies 
\[
 \hs^1_A(\R^n) \!:=\! \Big\{ \Psi \!\in\! \Ltwo(\R^n)\!:\!\Big( \!-\!\im\,\pdiff{x_j} \!-\! e A_j \Big) \Psi \!\in\! \Ltwo(\R^n), j=1,\dots,n \Big\} \!\subset\! \mD(V), \hspace{-8mm}
\]
\end{condlist}  
(see e.g.\ \cite[Def.\ 7.20]{LiebLoss}). 
Note that $\hs^1_A(\R^n)=\hs^1(\R^n)$ if $A \equiv 0$. For~$A\not\equiv 0$, 
the inclusion $\hs^1_A(\R^n) \subset \hs^1(\R^n)$ need not be true; $\Psi \in \hs^1_A(\R^n)$ only implies $|\Psi| \in \hs^1(\R^n)$.
%

If {\rm \ref{KG1}}, {\rm \ref{KG2}} hold, then the quadratic form $\ft_0 := \big( H_0^{\frac 12} \,\cdot\,, H_0^{\frac 12} \cdot \big)$ with domain $\mD(\ft_0):= \mD(H_0^{\frac 12})$ is given by
\begin{align}
  & \ft_0[\Psi] = \big( (-\im\,\nabla - e A) \Psi, (-\im \nabla - e A) \Psi \big) + m^2 \| \Psi \|^2, \quad \mD(\ft_0) = \hs^1_A(\R^n).
  \nonumber  
\end{align}
%
In fact, it is easy to see that the formula for $\ft_0[\Psi]$ holds for $\Psi\in\Cinfty(\R^n)$; 
since $\Cinfty(\R^n)$ is a core of $\ft_0$ by \cite[Thm.~7.22]{LiebLoss}), it extends to $\Psi\in\mD(\ft_0)=\hs^1_A(\R^n)$.


It now remains to establish conditions on $V=eq$ guaranteeing that $V$ also satisfies the relative form-boundedness assumption \ref{A3} with respect to $H_0$ so that the operators $T(\la)= H_0\dot - V^2 + 2 \la V - \la^2$ are defined for $\la\in\C$ according to Lemma~\ref{lemma:form}.

Such conditions will, in general, depend on the particular properties of the magnetic potential $A$. 
For certain classes of vector potentials $A$, it may even be possible to define the operators $T(\la)$ for $\la \in \R$ without using 
Lemma~\ref{lemma:form}; if e.g.\ $A \!=\! (A_j)_{j=1}^n\in\Lfourloc(\R^n)^n$, $\nabla\cdot A \!\in\!\Ltwoloc(\R^n)$, and
$V^2\!\in\!\Ltwoloc(\R^n)$, 
then it can be shown that $T(\lambda)$ is essentially self-adjoint on $\Cinfty(\R^n)$ by the Lein\-felder-Simader theorem (see~\cite{LeSi81}, \cite[Thm.\ 1.15]{CFKS87}).




\end{document}